\DeclareMathAlphabet{\can}{OT1}{cmss}{m}{n}
\newtheorem{thm}{Theorem}[section]
\newtheorem{cor}[thm]{Corollary}
\newtheorem{lem}[thm]{Lemma}
\newtheorem{prop}[thm]{Proposition}
\newtheorem{exa}[thm]{Example}
\theoremstyle{definition}
\theoremstyle{fact}
\theoremstyle{conjecture}
\numberwithin{equation}{section}
\begin{document}
\title[quaternary codes  and their   binary images]
{quaternary codes  and their   binary images}

\author[Y. Wu]{Yansheng ~Wu}

\address[Y. Wu]{ School of Computer Science, Nanjing University of Posts and Telecommunications, Nanjing 210023, P. R. China.}
\email{yanshengwu@njupt.edu.cn}

\author[C. Li]{Chao Li}
\address[C. Li]{ School of Computer Science, Nanjing University of Posts and Telecommunications, Nanjing, 210023, P. R. China}
\email{1022041222@njupt.edu.cn}

\author[L. Zhang]{Lin Zhang}
\address[L. Zhang]{ School of Computer Science, Nanjing University of Posts and Telecommunications, Nanjing, 210023, P. R. China}
\email{zhangl@njupt.edu.cn}

\author[F. Xiao]{Fu Xiao}
\address[F. Xiao]{School of Computer Science, Nanjing University of Posts and Telecommunications, Nanjing 210023, P. R. China}
\email{xiaof@njupt.edu.cn}

\subjclass[2010]{11T71, 94B60, 94B05}
\keywords{quaternary code, simplicial complex, Gray map, binary nonlinear code, minimal code, secret sharing scheme}

\date{\today}


\baselineskip=20pt

\begin{abstract}   

Recently, simplicial complexes are used in constructions of several infinite families of  minimal and optimal linear codes by Hyun {\em et al.}  Building upon their research, in this paper more linear codes over the ring $\mathbb{Z}_4$ are constructed by simplicial complexes. Specifically, the Lee weight distributions of the resulting quaternary codes are determined and  two infinite families of four-Lee-weight quaternary codes are obtained. Compared to the databases of $\mathbb Z_4$ codes by Aydin {\em et al.}, at least nine new quaternary codes are found.  Thanks to the special structure of the defining sets, we have the ability to determine whether the Gray images of certain obtained quaternary codes are linear or not. This allows us to obtain two infinite families of binary nonlinear codes and  one infinite family of binary minimal linear codes.  Furthermore, utilizing these minimal binary codes, some secret sharing schemes as a byproduct also are established.


\end{abstract}

\maketitle

\bigskip
\section{Introduction}

Research on codes over rings {began as early as in the 1970s}. Hammons {\em et al.} \cite{HTZ} constructed some quaternary linear codes, including the quaternary Kerdock, Preparata, Goethals, Delsarte-Goethals, and Goethals-Delsarte codes, and gave some well-known nonlinear binary codes with recognizing error-correcting capabilities
which can be identified as images of linear codes over $\mathbb{Z}_4$ under the Gray map. In the book titled  Quaternary Codes \cite{WQ}   by  Z. X. Wan published in 1997, various aspects of quaternary codes {were} discussed. Since then, the constructions and properties of quaternary codes have attracted attention from scholars.

Recently, Hyun {\em et al}. \cite{CH,HLL} constructed some infinite families of binary optimal and minimal linear codes using simplicial complexes. And it is generalized to posets in \cite{HKWY}. After then, Wu {\em et al.} \cite{WZY} used simplicial complexes to present some optimal few-weight codes. Zhu {\em et al.} \cite{ZWY} utilized posets of the disjoint union of two chains to construct quaternary codes and presented  several new quaternary codes.  Wu {\em et al.} \cite{WLX} examined the connection between quaternary codes and their binary subfield codes. Specifically, they investigated how defining sets of quaternary codes relate to their binary subfield codes, and discovered some optimal quaternary codes as well as optimal binary subfield linear codes.


In this paper, we focus on the construction of  linear codes $\mathcal{C}$ over  $\mathbb{Z}_4$ as follows:
\begin{equation}\label{eq1}
  \mathcal C_{D}=\{c_{D}(\mathbf{u})=(\mathbf{u}\cdot \mathbf{t})_{\mathbf{t}\in D}:\mathbf u\in \Bbb Z_4^n\},
\end{equation}
where $D\subseteq \Bbb Z_4^n$ is called the defining set of the quaternary code $\mathcal C_D$. Here we will choose the defining set $D$ using simplicial complexes.

The rest of this paper is orgnaized as follows. In Section 2, we introduce some basic concepts and known results on simplicial complexes, Lee weight, the Gray map, and minimal codes. In Section 3, we determine the Lee weight distribution of the quaternary codes $\mathcal {C}_D$ in (\ref{eq1}). Some of them have four or five Lee weights.  In Section 4, we determine when the quaternary codes obtained in Section 3 have linear Gray images, and we also find two infinite families of binary nonlinear codes, give a class of  minimal binary codes  and give examples to secret sharing schemes. Finally, we compare our quaternary codes with known results and conclude the paper.

\section{Preliminaries}  

In this section, we will recall some concepts and results on linear codes, Gray map, quaternary codes,  simplicial complexes,  and minimal codes.

\subsection{Linear codes over finite fields}  Let $n$ be a positive integer and $\Bbb F_q^n$ denote the vector space of all $n$-tuples over the ﬁnite ﬁeld $\Bbb F_q$. We call $\mathcal{C}$ {an} $[n,k,d]$ linear code over $\Bbb F_q$ if $\mathcal{C}$ is a subspace of $\Bbb F_q^n$ of dimension $k$ with minimum Hamming distance $d$.  We also call the vectors in $\mathcal C$ codewords.
For  an integer $i $ with  $1\le i \le n$, assume that there are $A_i$ codewords in $\mathcal C$ with Hamming weight $i$.
Then $(1, A_1, \ldots, A_n)$  and $1+A_1z+\cdots +A_nz^n\in \mathbb Z[z]$ are called weight distribution and  weight enumerator of $\mathcal C$, respectively.
 Moreover, the code $\mathcal C$  {is said to be} {\it $t$-weight} if the number of nonzero $A_{i}$'s in the sequence $(A_1, \ldots, A_n)$ is exactly equal to $t$. We say that  the $[n,k,d]$ code  $\mathcal{C}$ is  {\it distance optimal} if there is no $[n,k,d+1]$ code (refer to~\cite[Chapter 2]{HP}). 
Given a linear code $\mathcal C$ of length $n$ over $\Bbb F_q$,
the  dual code of $\mathcal C$  is defined by
$$\mathcal C^{\bot} = \{ {\bf x}\in \Bbb F_q^{n} \mid {\bf x}{\bf y} ^T=0\mbox{ for  all } {\bf y}\in \mathcal C\}$$
where ${\bf x}{\bf y} ^T$ {denotes} the standard inner product of two vectors ${\bf x}$ and ${\bf y}$. 
 
 

 

\subsection{Lee weight, Gray map and quaternary codes}

Let $\mathbb{Z}_4$ be a ring of integers moudule 4, $n$ be a positive integer, and $\mathbb{Z}_4^n$ be the set of $n$-tuples over $\mathbb{Z}_4$. Every non-empty subset $\mathcal{C}$ of $\mathbb{Z}_4^n$ called a quaternary code. A linear code $\mathcal{C}$ of length $n$ is a submodule of $\mathbb{Z}_4^n$.

For each $u \in \mathbb{Z}_4$ there is a unique represenation $u = a + 2b$, where $a,b \in \mathbb{Z}_2$, the Gray map $ \phi $ from $\mathbb{Z}_4$ to $\mathbb{Z}_2^2$ is {defined}  by:
$$\phi : \mathbb{Z}_4 \to \mathbb{Z}_2^2,a+2b \mapsto (b,a+b).$$
Any vector $u \in \mathbb{Z}_4^n$ can be written as $\mathbf{u} = \mathbf{a} + 2\mathbf{b}$, where $\mathbf{a},\mathbf{b} \in \mathbb{Z}_2^n$. The map $\phi$ can be extended naturally from $\mathbb{Z}_4^n$ to $\Bbb Z_2^{2n}$ as follows:
 $$\phi: \mathbb{Z}_4^n\to \mathbb{Z}_2^{2n},\mathbf{u}=\mathbf{a}+2\mathbf{b}\mapsto (\mathbf{b}, \mathbf{a}+\mathbf{b}).$$
The Hamming weight of a vector {$\bf a$}  of length $n$ over $\mathbb{Z}_2$ is defined to be the number of nonzero entries in the vector {$\bf a$}. The Lee weight of a vector {$\bf u$} of length $n$ over $\mathbb{Z}_4$ is defined to be the Hamming weight of its Gray image as follows:
$$w_{L}(\mathbf{u}) = w_{L}(\mathbf{a}+2\mathbf{b}) = w_{H}(\mathbf{b}) + w_{H}(\mathbf{a}+\mathbf{b}).$$

The Lee distance of $\mathbf{x},\mathbf{y}\in \mathbb{Z}_4^n$ is defined as $w_L(\mathbf{x-y})$. {From}, \cite[theorem 3.1]{WQ},  the Gray map is an isometry from $(\mathbb{Z}_4^n, d_L)$ to $(\mathbb{Z}_2^{2n}, d_H)$ and  is a weight-preserving map. 
In general, the Gray image of a quaternary code is not necessarily a binary linear code, since the Gray map is not $\mathbb{F}_2$-linear. Assume that a quaternary code $\mathcal C$ has Lee weights $\{i_1, \ldots, i_k\}$ and 
there are $A_i$ codewords in a quaternary $\mathcal C$ with the Lee weight $i$.
Then $(1, A_{i_1}, \ldots, A_{i_k})$  and $1+A_{i_1}z^{i_1}+\cdots +A_{i_k}z^{i_k}\in \mathbb Z[z]$ are called Lee weight distribution and Lee weight enumerator of $\mathcal C$, respectively.

Two codes are said to be equivalent if one can be obtained from the other by permuting the coordinates and (if necessary) changing the signs of certain coordinates. Codes differing by only a permutation of coordinates are called permutation-equivalent. Any linear code $\mathcal C$ over $\mathbb Z_4$ is permutation-equivalent to a code with generator matrix $G$ of the form
\begin{eqnarray} \left( \begin{array}{ccccccc}
I_{k_1} &A_1&  B_1+2B_2 \\
0& 2I_{k_2}&2A_2
\end{array} \right),
\end{eqnarray}
where $A_1 , A_2 , B_1$, and $B_2$ are matrices with entries $0$ or $1$ and $I _k$ is the identity matrix of order $k$. Such a code has size $4^{k_1}2^{k_2}$ . From \cite{WQ}, the code is a free module if and only if $k_2 = 0$. If $\mathcal C$ has length $n$ and minimum Lee weight $d_L$, then it is referred to as an $[n, 4^{k_1} 2^{k_2}, d_L ]$-code.

\subsection{Simplicial complexes}

Let $\mathbb{Z}_2$ be the ring with order two and  $n$ be a positive integer. The support $\mathrm{supp}({\bf v})$ of a vector ${\bf v} \in \mathbb{Z}_2^n$ is defined by 
the set of nonzero coordinate positions. The Hamming weight $w_H({\bf v})$ of ${\bf v}\in \mathbb{Z}^n_2$ is defined by the size of $\mathrm{supp}({\bf v})$. 
There is a bijection between $\mathbb{Z}_2^n$ and $2^{[n]}$ being the power set of $[n]=\{1, \ldots, n\}$, defined by ${\bf v}\mapsto$ supp$({\bf v})$.

For two vectors ${\bf u,v}\in \mathbb{Z}_2^n$, we say ${\bf v}\subseteq {\bf u}$ if $\mathrm{supp}({\bf v})\subseteq \mathrm{supp}({\bf u})$.  We say that a 
family $\Delta \subseteq \mathbb{Z}_2^n $ is a {\em simplicial complex} if ${\bf u}\in \Delta$ and ${\bf v}\subseteq {\bf u}$ imply ${\bf v}\in \Delta$. For a simplicial complex $\Delta$, a maximal element of $\Delta $ is one that is not properly contained in any other element of $\Delta$. Let $\mathcal{F}=\{F_1, \ldots, F_n\}$ be the family of maximal elements of $\Delta$. 
For each $F\subseteq [n]$, the simplicial complex $\Delta_F$ generated by $F$ is defined to be the family of all subsets of $F$. Throughout this paper, we will identify a vector in $\mathbb{Z}_2^n$  with its support. 

Let $\Delta \subseteq \mathbb{Z}_2^n $ be a simplicial complex. Hyun {\em et al}. \cite{CH} introduced the following $n$-variable generating function associated with the set $\Delta $:
$$ \mathcal{H}_{\Delta}(x_1,x_2,\ldots, x_n)=\sum_{{\bf v}\in \Delta}\prod_{i=1}^nx_i^{v_i}\in \mathbb{Z}[x_1,x_2, \ldots, x_n], $$
where ${\bf v}=(v_1,v_2,\ldots ,v_n)  \in \mathbb{Z}_2^n$.





The following lemma is crucial for determining the Lee weight distributions of the codes in \eqref{eq1}.

\begin{lem}\cite[Theorem 1]{CH}  \label{lem2.1} {\rm  
Let $\Delta$ be a simplicial complex of $\mathbb{Z}_2^n$ with the set of maximal elements $\mathcal {F}$. Then we have
$$ \mathcal{H}_{\Delta}(x_1,x_2,\ldots, x_n)=\sum_{\emptyset\neq S\subseteq \mathcal{F}}(-1)^{|S|+1}\prod_{i\in \cap S}(1+x_i), $$
where we define $\prod_{i\in \emptyset}(1+x_i) = 1$, and the size of $\Delta$ 
$$ |\Delta| = \sum_{\emptyset \neq S \subseteq \mathcal{F}}(-1)^{|S|+1}2^{|\cap S|}.$$}
\end{lem}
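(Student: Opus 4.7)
The plan is to use the standard inclusion--exclusion principle applied to the decomposition of $\Delta$ as a union of ``principal'' sub-complexes generated by its maximal elements.

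First I would observe that since $\Delta$ is a finite simplicial complex and every element is contained in some maximal element, one has the union decomposition
\begin{equation*}
\Delta = \bigcup_{F \in \mathcal{F}} \Delta_{F},
\end{equation*}
where $\Delta_{F}$ is the family of all subsets of $F$. Next, for a single generator $F$, a direct expansion shows
\begin{equation*}
\mathcal{H}_{\Delta_F}(x_1,\ldots,x_n) = \sum_{\mathbf{v}\subseteq F}\prod_{i=1}^n x_i^{v_i} = \prod_{i\in F}(1+x_i),
\end{equation*}
since choosing a subset of $F$ amounts to independently deciding, for each $i \in F$, whether to include $x_i$ (contributing the factor $1+x_i$), while coordinates $i\notin F$ must be $0$.

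Then I would apply inclusion--exclusion to the generating function of the union. The key observation here is that the intersection is again a principal complex: for any nonempty $S\subseteq \mathcal{F}$,
\begin{equation*}
\bigcap_{F\in S} \Delta_F = \Delta_{\cap S},
\end{equation*}
because a vector lies in every $\Delta_F$ for $F\in S$ iff its support is contained in every $F\in S$, i.e., in $\cap S$. Combining this with the single-generator formula above yields
\begin{equation*}
\mathcal{H}_{\Delta}(x_1,\ldots,x_n) = \sum_{\emptyset\neq S\subseteq \mathcal{F}}(-1)^{|S|+1}\prod_{i\in\cap S}(1+x_i),
\end{equation*}
with the convention that the empty product equals $1$ (covering the case $\cap S = \emptyset$).

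The cardinality formula then follows immediately by specializing $x_1=\cdots=x_n=1$, since $\mathcal{H}_{\Delta}(1,\ldots,1) = |\Delta|$ and each factor $1+x_i$ becomes $2$, giving $\prod_{i\in \cap S}(1+x_i)=2^{|\cap S|}$. The only real subtlety is making sure $\Delta = \bigcup_{F\in \mathcal{F}}\Delta_F$; this uses the downward-closure axiom together with finiteness so that every element is dominated by a maximal one. Once that is noted, the remainder is routine inclusion--exclusion, so I do not anticipate any serious obstacle in the proof.
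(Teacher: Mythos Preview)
Your argument is correct: the inclusion--exclusion on the decomposition $\Delta=\bigcup_{F\in\mathcal F}\Delta_F$, together with the identities $\mathcal H_{\Delta_F}=\prod_{i\in F}(1+x_i)$ and $\bigcap_{F\in S}\Delta_F=\Delta_{\cap S}$, gives exactly the stated formula, and the specialization $x_i=1$ yields $|\Delta|$. Note that the paper does not supply its own proof of this lemma but simply quotes it from \cite[Theorem~1]{CH}; your proof is the standard one and is essentially what appears there.
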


\subsection{Minimal linear codes}

For two vectors ${\bf u,v}\in \Bbb F_2^n$, we say that ${\bf u}$ covers ${\bf v}$ if $\mathrm{supp}({\bf v})\subseteq\mathrm{supp}({\bf u})$. 
A nonzero codeword ${\bf u}$ in a linear code $\mathcal{C}$ is said to be {\em minimal} if ${\bf u}$ covers the zero vector and the ${\bf u}$ itself but no other codewords in the code $\mathcal{C}$. A linear code $\mathcal{C}$ is said to be {\em minimal} if every nonzero codeword in  $\mathcal{C}$ is minimal.

A known sufficient condition for a linear code to be minimal {is called the the Ashikhmin-Barg condition}. More precisely we have the following.

\begin{lem}\cite{AB}\label{lem2.2} {\rm 
Let $\mathcal{C}$ be a linear code over $\mathbb{F}_q$ with $wt_{min}$ and $wt_{max}$ as minimum and maximum weights of its non-zero codewords. If 
$$\frac{wt_{min}}{wt_{max}} > \frac{q-1}{q},$$ 
then $\mathcal{C}$ is minimal.
}
\end{lem}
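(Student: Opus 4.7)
My plan is to argue the contrapositive at the level of individual pairs of codewords: I will show that if some nonzero $\mathbf{u}\in\mathcal{C}$ covers a nonzero codeword $\mathbf{v}\in\mathcal{C}$ that is \emph{not} a scalar multiple of $\mathbf{u}$, then $wt_{min}/wt_{max}\le (q-1)/q$, directly contradicting the hypothesis. Since a codeword covered by $\mathbf{u}$ and having the same support as $\mathbf{u}$ is automatically a scalar multiple of $\mathbf{u}$ (equivalently, it gives no new information beyond $\mathbf{u}$ itself), this proves minimality.

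The key computational step is an averaging identity. Fix nonzero $\mathbf{u},\mathbf{v}\in\mathcal{C}$ with $\mathrm{supp}(\mathbf{v})\subseteq\mathrm{supp}(\mathbf{u})$, and consider the family $\{\mathbf{u}-a\mathbf{v}\}_{a\in\mathbb{F}_q^\ast}$. Examining coordinates one at a time, exactly three cases occur: (i) $u_i=v_i=0$ contributes $0$ to every weight in the sum; (ii) $u_i\ne 0$, $v_i=0$ contributes $1$ for each of the $q-1$ choices of $a$; (iii) $u_i\ne 0$, $v_i\ne 0$ contributes $1$ for exactly the $q-2$ values of $a$ other than $u_i/v_i$. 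Using the support containment to write $|\{i:u_i\ne 0,\,v_i=0\}|=w_H(\mathbf{u})-w_H(\mathbf{v})$, I would collect the contributions to obtain
$$\sum_{a\in\mathbb{F}_q^\ast} w_H(\mathbf{u}-a\mathbf{v}) \;=\; (q-1)\,w_H(\mathbf{u})-w_H(\mathbf{v}).$$

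The remainder is a routine inequality chase. If $\mathbf{v}$ is not a scalar multiple of $\mathbf{u}$, then $\mathbf{u}-a\mathbf{v}\ne\mathbf{0}$ for every $a\in\mathbb{F}_q^\ast$, so each of the $q-1$ summands on the left is at least $wt_{min}$; on the right, $w_H(\mathbf{u})\le wt_{max}$ and $w_H(\mathbf{v})\ge wt_{min}$. Plugging in yields $(q-1)wt_{min}\le (q-1)wt_{max}-wt_{min}$, i.e.\ $q\cdot wt_{min}\le (q-1)wt_{max}$, contradicting the Ashikhmin--Barg hypothesis. Hence every nonzero codeword covered by $\mathbf{u}$ must be a scalar multiple of $\mathbf{u}$, and $\mathcal{C}$ is minimal.

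The only delicate step is the three-case coordinate bookkeeping in the averaging identity; once that formula is in hand, the rest is an immediate rearrangement. I do not foresee any serious obstacle, although one must be careful to exploit the support-containment hypothesis to rule out the fourth case $u_i=0$, $v_i\ne 0$ that would otherwise alter the count.
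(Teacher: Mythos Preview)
The paper does not prove this lemma; it simply cites it from \cite{AB}. Your argument is essentially the original Ashikhmin--Barg proof, and the averaging identity together with the inequality chase is correct.

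One remark: your aside that ``a codeword covered by $\mathbf{u}$ and having the same support as $\mathbf{u}$ is automatically a scalar multiple of $\mathbf{u}$'' is false for $q>2$ (for instance $(1,1)$ and $(1,2)$ over $\mathbb{F}_3$ have the same support but are not proportional). Fortunately your proof never actually uses this claim: the averaging argument directly shows that any nonzero $\mathbf{v}$ with $\mathrm{supp}(\mathbf{v})\subseteq\mathrm{supp}(\mathbf{u})$ which is \emph{not} a scalar multiple of $\mathbf{u}$ forces $wt_{min}/wt_{max}\le (q-1)/q$, and the conclusion ``the only codewords with support contained in $\mathrm{supp}(\mathbf{u})$ are the scalar multiples of $\mathbf{u}$'' is exactly the standard definition of a minimal codeword over $\mathbb{F}_q$. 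Simply delete that sentence and the proof is clean.
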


The first example of a minimal linear code violating Ashikhmin-Barg's condition was given by  Cohen {\em et al.} in \cite[Remark 1]{CMP}. Chang {\em et al.} \cite{CH} presented one infinite family of minimal binary linear codes violating Ashikhmin-Barg's condition. Ding, Heng and Zhou \cite{DHZ, HDZ} presented a necessary and sufficient condition for $q$-ary linear codes to be minimal, and obtained some infinite families of minimal binary and ternary linear codes not satisfying the condition of Ashikhmin-Barg.


An important application of minimal linear codes is to build secret sharing schemes
\cite{YD,M,LY}, which are an important cryptographic primitive that enables to distribute a secret among multiple participants. According to \cite{YD,DD}, interesting access structures can be derived from the dual code $\mathcal{C}^{\bot}$ of a linear code $\mathcal{C}$. It is widely used in banking systems, cryptographic protocols and the control of nuclear weapons. Secret sharing schemes constructed from minimal linear codes are known to possess access structures. According to \cite{YD}, the secret sharing  scheme base on a $q$-ary $(N,K)$ linear code $\mathcal{C}$ over $\mathbb{F}_q$, the secret $s$ is an element of $\mathbb{F}_q$. The  scheme involves $N-1$ participants and one trusted dealer. The minimal codewords in the dual code $\mathcal{C}^{\bot}$, where the first component is specified as $1$, define the access structure for the secret-sharing scheme corresponding to the linear code $\mathcal{C}$.

\section{Lee Weight Distributions of Quaternary Codes}


In this section,  we focus on quaternary codes over $\mathbb{Z}_4$ by  employing simplicial complexes with two maximal elements.


Let $D_i \subseteq \Bbb Z_2^n$, $i = 1,2$. Assume that $D = D_1 + 2D_2 \subseteq \Bbb Z_4^n$. In \eqref{eq1},  we define a quaternary code 
\begin{equation*}
\mathcal C_{D}=\{c_{D}(\mathbf{u})=(\mathbf{u}\cdot \mathbf{t})_{\mathbf{t}\in D}:\mathbf u\in \Bbb Z_4^n\},
\end{equation*}
where $\cdot$ is Euclidean inner product over $\Bbb Z_4$. 

If $\mathbf{u}=\mathbf{0}$, then the Lee weight of the codeword $c_{D}(\mathbf{u})$ is equal zero. Next suppose that $ \mathbf{0}\neq\mathbf{u}=\boldsymbol{\alpha}+2\boldsymbol{\beta}$, $\mathbf{t}={\bf t}_1+2{\bf t}_2$, where ${\boldsymbol{\alpha}=(\alpha_1, \ldots, \alpha_n),\boldsymbol{\beta}=(\beta_1, \ldots, \beta_n)}\in \Bbb Z_2^n$, ${\mathbf t_i}\in D_{i}$, $i = 1, 2$.  Then the Lee weight of the codeword $c_{D}(\mathbf{u})$ can be computed  by
 
\begin{eqnarray}\label{eq2}
  w_L(c_{D}(\mathbf{u}))
  &=&w_L((\boldsymbol{\alpha}+2\boldsymbol{\beta}) \cdot ({\bf t}_1+2{\bf t}_2))\nonumber\\
  &=&w_L((\boldsymbol{\alpha}{\bf t}_1+2(\boldsymbol{\alpha}{\bf t}_2+\boldsymbol{\beta}{\bf t}_1))_{{\mathbf t_i}\in D_{i}})\nonumber\\
  &=&w_H(\boldsymbol{\beta}{\bf t}_1+\boldsymbol{\alpha}{\bf t}_2) + w_H((\boldsymbol{\alpha}+\boldsymbol{\beta}){\bf t}_1+\boldsymbol{\alpha}{\bf t}_2).
\end{eqnarray}
By the definition of Hamming weight of codeword, we have 
\begin{eqnarray} \label{eq3}
w_L(c_{D}(\mathbf{u}))
&=&|D|-\sum_{\mathbf {t}_1\in  D_{1}}\sum_{\mathbf t_2\in D_{2}}(\frac12 \sum_{y\in\Bbb F_2} (-1)^{(\boldsymbol{\alpha}\mathbf{t}_2+\boldsymbol{\beta}\mathbf{t}_1)y}) \nonumber\\
&+&|D|-\sum_{\mathbf {t}_1\in  D_{1}}\sum_{\mathbf t_2\in D_{2}}(\frac12 \sum_{z\in\Bbb F_2} (-1)^{((\boldsymbol{\alpha}+\boldsymbol{\beta}){\bf t}_1 + \boldsymbol{\alpha}{\bf t}_2)z})\nonumber\\
&=&|D|-\frac12|D|-\frac12\sum_{\mathbf {t}_1\in  D_{1}}(-1)^{\boldsymbol{\beta}\mathbf{t}_1}\sum_{\mathbf t_2\in D_{2}}(-1)^{\boldsymbol{\alpha}\mathbf{t}_2} \nonumber\\
&+&|D|-\frac12|D|-\frac12\sum_{\mathbf {t}_1\in  D_{1}}(-1)^{(\boldsymbol{\alpha}+\boldsymbol{\beta}){\bf t}_1}\sum_{\mathbf t_2\in D_{2}}(-1)^{\boldsymbol{\alpha}{\bf t}_2}\nonumber\\
&=&|D|-\frac12\sum_{\mathbf t_2\in D_{2}}(-1)^{\boldsymbol{\alpha}\mathbf{t}_2}(\sum_{\mathbf {t}_1\in  D_{1}}(-1)^{\boldsymbol{\beta}\mathbf{t}_1}+\sum_{\mathbf {t}_1\in  D_{1}}(-1)^{(\boldsymbol{\alpha}+\boldsymbol{\beta}){\bf t}_1}).
\end{eqnarray} 

For $X$ a subset of $[n]$, we use $\psi(\mathbf {v}|X)$ to denote a Boolean function in $n$-variable, and $\psi(\mathbf {v}|X)=1$ if and only if $\mathbf {v}\bigcap X=\emptyset$ , which means $\mathrm{supp}({\bf v})\bigcap \mathrm{supp}({X})=\emptyset$. For a vector $\mathbf {v}=(v_1, \ldots, v_n)\in \Bbb Z_2^n$ and a  simplicial complex $\Delta_A$, by Lemma \ref{lem2.1} 
\begin{eqnarray} \label{eq4}
&&\sum_{\mathbf x\in \Delta_A} (-1)^{\mathbf {v}\cdot\bf x}=\mathcal{H}_{\Delta_A}((-1)^{v_1},(-1)^{v_2},\ldots, (-1)^{v_n})\nonumber\\
&=&\prod_{i\in A}(1+(-1)^{v_i})
=\prod_{i\in A}(2-2v_i)=2^{|A|}\prod_{i\in A}(1-v_i)=2^{|A|}\psi(\mathbf {v}|A).
\end{eqnarray} 

Before giving our main result, the following lemma is useful in determining the number  of codewords having the  same Lee weight.

\begin{lem}\label{lem31}\cite{HKWY} 
{\rm
For two subsets $A,B$ of $[n]$, we set the following cases,
$$\mathcal{U}_1 = \{{\bf u} \in \mathbb{Z}_2^{n}: {\bf u} \cap (A \cup B) = \emptyset\},$$
$$\mathcal{U}_2 = \{{\bf u} \in \mathbb{Z}_2^{n}: {\bf u} \cap A = \emptyset, {\bf u} \cap (B\backslash A) \neq \emptyset\},$$
$$\mathcal{U}_3 = \{{\bf u} \in \mathbb{Z}_2^{n}: {\bf u} \cap B = \emptyset,{\bf u} \cap(A\backslash B) \neq \emptyset\},$$
$$\mathcal{U}_4 = \{{\bf u} \in \mathbb{Z}_2^{n}: {\bf u} \cap (A\backslash B) \neq \emptyset,{\bf u} \cap(A\cap B) = \emptyset,{\bf u} \cap(B\backslash A) \neq \emptyset\},$$
$$\mathcal{U}_5 = \{{\bf u} \in \mathbb{Z}_2^{n}: {\bf u} \cap (A\backslash B) = \emptyset,{\bf u} \cap(A\cap B) \neq \emptyset,{\bf u} \cap(B\backslash A) = \emptyset\},$$
$$\mathcal{U}_6 = \{{\bf u} \in \mathbb{Z}_2^{n}: {\bf u} \cap (A\backslash B) \neq \emptyset,{\bf u} \cap(A\cap B) \neq \emptyset,{\bf u} \cap(B\backslash A) = \emptyset\},$$
$$\mathcal{U}_7 = \{{\bf u} \in \mathbb{Z}_2^{n}: {\bf u} \cap (A\backslash B) = \emptyset,{\bf u}\cap(A\cap B) \neq \emptyset,{\bf u} \cap(B\backslash A) \neq \emptyset\},$$
$$\mathcal{U}_8 = \{{\bf u} \in \mathbb{Z}_2^{n}: {\bf u} \cap (A\backslash B) \neq \emptyset,{\bf u}\cap(A\cap B) \neq \emptyset,{\bf u} \cap(B\backslash A) \neq \emptyset\}.$$
Then
$$|\mathcal{U}_1| = 2^{n-|A \cup B|},$$
$$|\mathcal{U}_2| = 2^{n-|A \cup B|}(2^{|B\backslash A|}-1),$$
$$|\mathcal{U}_3| = 2^{n-|A \cup B|}(2^{|A\backslash B|}-1),$$
$$|\mathcal{U}_4| = 2^{n-|A \cup B|}(2^{|A\backslash B|}-1)(2^{|B\backslash A|}-1),$$
$$|\mathcal{U}_5| = 2^{n-|A \cup B|}(2^{|A\cap B|}-1),$$
$$|\mathcal{U}_6| = 2^{n-|A \cup B|}(2^{|A\backslash B|}-1)(2^{|A\cap B|}-1),$$
$$|\mathcal{U}_7| = 2^{n-|A \cup B|}(2^{|A\cap B|}-1)(2^{|B\backslash A|}-1),$$
$$|\mathcal{U}_8| = 2^{n-|A \cup B|}(2^{|A\backslash B|}-1)(2^{|A\cap B|}-1)(2^{|B\backslash A|}-1).$$
}
\end{lem}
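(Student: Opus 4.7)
The plan is to partition $[n]$ into the four pairwise disjoint regions $P_1 = A \setminus B$, $P_2 = A \cap B$, $P_3 = B \setminus A$, and $P_4 = [n] \setminus (A \cup B)$. Any ${\bf u}\in\mathbb{Z}_2^n$, identified with its support, then decomposes uniquely as ${\bf u} = U_1 \sqcup U_2 \sqcup U_3 \sqcup U_4$ with $U_i = {\bf u}\cap P_i$. Crucially, the four pieces can be chosen independently of one another, so the number of ${\bf u}$'s satisfying any conjunction of conditions of the form ``${\bf u}\cap X = \emptyset$'' or ``${\bf u}\cap X \neq \emptyset$'', where $X$ is a union of some of the $P_i$'s, factors as a product over $i$.

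With this setup, the key numerical fact is that the number of admissible choices for $U_i \subseteq P_i$ is $2^{|P_i|}$ if $U_i$ is unconstrained, $1$ if $U_i$ is required to be empty, and $2^{|P_i|}-1$ if $U_i$ is required to be nonempty. Each of the eight defining conditions in the lemma can then be rewritten as constraints on the individual $U_i$'s; for example, ${\bf u}\cap(A\cup B)=\emptyset$ forces $U_1=U_2=U_3=\emptyset$ and leaves $U_4$ free; ${\bf u}\cap A=\emptyset$ forces $U_1=U_2=\emptyset$; ${\bf u}\cap(A\setminus B)\neq\emptyset$ forces $U_1\neq\emptyset$, and so on. Using $|P_1|=|A\setminus B|$, $|P_2|=|A\cap B|$, $|P_3|=|B\setminus A|$, and $|P_4|=n-|A\cup B|$, substituting the appropriate factors reproduces the eight stated formulas.

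To execute the proof, I would simply tabulate, for each $\mathcal{U}_j$ with $j=1,\dots,8$, the three constraints ``empty/nonempty/free'' on each of $P_1,P_2,P_3$, observe that $U_4$ is always free and contributes $2^{n-|A\cup B|}$, and read off the resulting product. For instance, $\mathcal{U}_4$ requires $U_1\neq\emptyset$, $U_2=\emptyset$, $U_3\neq\emptyset$, and $U_4$ free, which gives $(2^{|A\setminus B|}-1)\cdot 1\cdot(2^{|B\setminus A|}-1)\cdot 2^{n-|A\cup B|}$, matching the claim.

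There is no genuine obstacle here; the lemma is essentially bookkeeping. The only point requiring care is correctly translating each hypothesis into conditions on $P_1,P_2,P_3$ (using $A=P_1\sqcup P_2$ and $B=P_2\sqcup P_3$) and making sure no two imposed conditions are incompatible, after which the factorization makes all eight counts immediate.
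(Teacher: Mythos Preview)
Your approach is correct. The paper does not actually prove this lemma; it merely cites it from \cite{HKWY} and states the result without argument. Your partition of $[n]$ into $A\setminus B$, $A\cap B$, $B\setminus A$, and $[n]\setminus(A\cup B)$ with independent counting over each block is the standard and natural way to establish these counts, and every step you outline goes through without difficulty.
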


\begin{thm}\label{thm1} {\rm 
Let $n \geq 2$ be an integer and $\emptyset \neq A,B \subseteq [n]$.
Assume that $\Delta$ is a simplicial complex of $\mathbb{F}_2^n$ with the two maximal elements $A,B$ and  $\Delta_A$ is a simplicial complex of $\mathbb{F}_2^n$. 
In Equation \eqref{eq1}, let $D = \Delta_A + 2\Delta^*$, where $\Delta^*=\Delta \backslash \{{\bf 0}\}$. Then the code  $\mathcal C_{D}$ in \eqref{eq1} has length $|D|=2^{|A|}(2^{|A|}+2^{|B|}-2^{|A \cap B|}-1)$ and size $4^{|A|}2^{|B\backslash  A| }$. Its Lee weight distribution is given in Table \ref{tabZ4WD1}.

}
\end{thm}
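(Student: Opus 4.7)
The plan is to start from the Lee-weight identity (3.3), evaluate each of the three exponential sums using (3.4) and Lemma~\ref{lem2.1}, and then stratify the $4^n$ choices of $\mathbf{u}=\boldsymbol{\alpha}+2\boldsymbol{\beta}$ according to the support behaviour of $\boldsymbol{\alpha}$ and $\boldsymbol{\beta}$.

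The length is immediate from $|D|=|\Delta_A|\cdot|\Delta^*|=2^{|A|}(|\Delta|-1)$, since Lemma~\ref{lem2.1} applied to the two-maximal-element complex $\Delta$ yields $|\Delta|=2^{|A|}+2^{|B|}-2^{|A\cap B|}$. For the weight, (3.4) applied to $\Delta_A$ gives $\sum_{\mathbf{t}_1\in\Delta_A}(-1)^{\mathbf{v}\cdot\mathbf{t}_1}=2^{|A|}\psi(\mathbf{v}|A)$ for $\mathbf{v}\in\{\boldsymbol{\beta},\boldsymbol{\alpha}+\boldsymbol{\beta}\}$, while Lemma~\ref{lem2.1} applied to $\Delta$ and subtracting the $\mathbf{t}_2=\mathbf{0}$ term gives $\sum_{\mathbf{t}_2\in\Delta^*}(-1)^{\boldsymbol{\alpha}\cdot\mathbf{t}_2}=2^{|A|}\psi(\boldsymbol{\alpha}|A)+2^{|B|}\psi(\boldsymbol{\alpha}|B)-2^{|A\cap B|}\psi(\boldsymbol{\alpha}|A\cap B)-1$. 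Calling this last quantity $T(\boldsymbol{\alpha})$ and substituting into (3.3) reduces the Lee weight to $w_L(c_D(\mathbf{u}))=|D|-\frac{1}{2}\,T(\boldsymbol{\alpha})\cdot 2^{|A|}\bigl(\psi(\boldsymbol{\beta}|A)+\psi(\boldsymbol{\alpha}+\boldsymbol{\beta}|A)\bigr)$.

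I would then partition the choices of $\mathbf{u}$ along two independent axes. Axis~(i): classify $\boldsymbol{\alpha}$ by how its support meets the three disjoint pieces $A\setminus B$, $A\cap B$, $B\setminus A$, which is exactly the eight-case stratification of Lemma~\ref{lem31} and which entirely determines $T(\boldsymbol{\alpha})$ (it takes only five distinct values, since strata $\mathcal{U}_5,\mathcal{U}_6,\mathcal{U}_7,\mathcal{U}_8$ all yield $T=-1$). Axis~(ii): classify $\boldsymbol{\beta}|_A$ relative to $\boldsymbol{\alpha}|_A$ via the three-way branch $\boldsymbol{\beta}|_A=0$, $\boldsymbol{\beta}|_A=\boldsymbol{\alpha}|_A\neq 0$, or neither; the sum $\psi(\boldsymbol{\beta}|A)+\psi(\boldsymbol{\alpha}+\boldsymbol{\beta}|A)$ takes the respective values $2$, $1$, or $0$ according as $\boldsymbol{\alpha}|_A,\boldsymbol{\beta}|_A$ fall into one of these configurations. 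In each joint cell $w_L$ is constant and computable in closed form; the multiplicity is the $\boldsymbol{\alpha}$-count from Lemma~\ref{lem31} multiplied by the $\boldsymbol{\beta}$-count $2^{n-|A|}$, $2^{n-|A|}(2^{|A|}-1)$, or $2^{n-|A|}(2^{|A|}-2)$ as appropriate. Merging the cells that yield the same weight produces the Lee weight distribution tabulated in Table~\ref{tabZ4WD1}.

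The size claim is obtained as a sanity check on the kernel: $w_L=0$ forces $T(\boldsymbol{\alpha})\cdot 2^{|A|}(\psi(\boldsymbol{\beta}|A)+\psi(\boldsymbol{\alpha}+\boldsymbol{\beta}|A))=2|D|$, and since $T(\boldsymbol{\alpha})\le 2^{|A|}+2^{|B|}-2^{|A\cap B|}-1$ with the $\boldsymbol{\beta}$-factor at most $2^{|A|+1}$, equality requires $\boldsymbol{\alpha}\cap(A\cup B)=\emptyset$ and $\boldsymbol{\beta}\cap A=\emptyset$; the kernel therefore has cardinality $2^{n-|A\cup B|}\cdot 2^{n-|A|}$, whence $|\mathcal{C}_D|=4^n/2^{2n-|A|-|A\cup B|}=4^{|A|}\cdot 2^{|B\setminus A|}$. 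The main obstacle is the bookkeeping: several joint strata collapse to the same $w_L$-value and must be summed carefully, and one must confirm that degenerate configurations (e.g.\ $A\subseteq B$, $A\cap B=\emptyset$, or $A=B$) are correctly absorbed into the table because the corresponding Lemma~\ref{lem31} multiplicities then vanish.
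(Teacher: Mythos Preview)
Your plan is correct and follows essentially the same route as the paper: both reduce via (3.3)--(3.4) and Lemma~\ref{lem2.1} to the closed form
\[
w_L(c_D(\mathbf{u}))=|D|-2^{|A|-1}\,T(\boldsymbol{\alpha})\bigl(\psi(\boldsymbol{\beta}|A)+\psi(\boldsymbol{\alpha}+\boldsymbol{\beta}|A)\bigr),
\]
and then do casework with the stratification of Lemma~\ref{lem31}. The paper organises the cases as a five-way split on the pair $(\boldsymbol{\alpha},\boldsymbol{\beta})$ (zero/nonzero/equal) with a further four subcases inside Case~5, rather than your two-axis product, and it reads off the size $4^{|A|}2^{|B\setminus A|}$ from the generator-matrix normal form (2.2) rather than via your kernel count; these are purely cosmetic differences.

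One small slip in your axis~(ii): the $\psi$-sum equals $2$ only when \emph{both} $\boldsymbol{\beta}|_A=0$ and $\boldsymbol{\alpha}|_A=0$. When $\boldsymbol{\alpha}|_A\neq 0$ (strata $\mathcal{U}_3,\ldots,\mathcal{U}_8$) the branch $\boldsymbol{\beta}|_A=0$ already gives sum~$1$, so there your first two branches merge and the correct $\boldsymbol{\beta}$-count for sum~$=1$ is $2\cdot 2^{n-|A|}$; this is exactly what the paper's Cases~3, 4 and~5(2),(3) are capturing. With that correction the bookkeeping goes through as you describe.
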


\begin{table}[h]  
  \caption{Lee weight distribution of $\mathcal C_{D}$ in Theorem \ref{thm1} }   
  \begin{tabu} to 0.9\textwidth{X[2.5,c]|X[2,c]}  
  \hline 
  \rm{Lee Weight}&\rm{Frequency}\\ 
  \hline
  $0$&$ 1$\\ 
  \hline
  $2^{|A|}(2^{|A|}+2^{|B|}-2^{|A \cap B|}-1)$&$(2^{|A\cup B|}-2^{|B \backslash A|})(2^{|A|}-1)$\\ 
  \hline
  $2^{|A|+|B|}$&$ (2^{|B\backslash A|}-1)$\\ 
  \hline
  $2^{2|A|} + 2^{|A|+|B|-1}  - 2^{|A|-1}(2^{|A\cap B|}+1)$&$
  2(2^{|A\backslash B|}-1)
  $\\ 
  \hline
  $2^{|A|}(2^{|A|}+2^{|B|})- 2^{|A|-1}(2^{|A\cap B|}+1)$&$2(2^{|A\backslash B|}-1)(2^{|B\backslash A|}-1)$\\ 
  \hline
  $2^{|A|}(2^{|A|}+2^{|B|}-2^{|A \cap B|}) - 2^{|A|-1}$&$2^{1+|A \cup B|-|A\cap B|}(2^{|A\cap B|}-1)$\\ 
  \hline
   
  \end{tabu}
  \label{tabZ4WD1}  
\end{table}

\begin{proof}

By Equations \eqref{eq3} and \eqref{eq4}
\begin{eqnarray*} 
&&w_L(c_{D}(\mathbf{u}))\\
&=&|D|-\frac12\sum_{\mathbf t_2\in D_{2}}(-1)^{\boldsymbol{\alpha}\mathbf{t}_2}(\sum_{\mathbf {t}_1\in  D_{1}}(-1)^{\boldsymbol{\beta}\mathbf{t}_1}+\sum_{\mathbf {t}_1\in  D_{1}}(-1)^{(\boldsymbol{\alpha}+\boldsymbol{\beta}){\bf t}_1})\nonumber\\
&=&|D|-2^{|A|-1}(2^{|A|}\psi(\boldsymbol{\alpha}|A) +  2^{|B|}\psi(\boldsymbol{\alpha}|B)-2^{|A \cap B|}\psi(\boldsymbol{\alpha}|A\cap B)-1)\psi (\boldsymbol{\beta}|A)\\
&-&2^{|A|-1}(2^{|A|}\psi(\boldsymbol{\alpha}|A) +  2^{|B|}\psi(\boldsymbol{\alpha}|B)-2^{|A \cap B|}\psi(\boldsymbol{\alpha}|A\cap B)-1)\psi ((\boldsymbol{\alpha}+\boldsymbol{\beta})|A).
\end{eqnarray*} 



Next we should consider the following cases:

Case $1$, $\boldsymbol{\alpha} = \boldsymbol{\beta} = {\bf 0}$. We have $w_L(c_{D}(\mathbf{u}))=0$.

Case $2$, $\boldsymbol{\alpha}  = {\bf 0}, \boldsymbol{\beta}\neq {\bf 0}$. Then 
\begin{eqnarray*}
w_L(c_{D}(\mathbf{u}))&=&|D|-2^{|A|}( 2^{|A|} + 2^{|B|} -  2^{|A \cap B|}-1)\psi (\boldsymbol{\beta}|A) \\
&=&
  \begin{cases}
    0&if\ \boldsymbol{\beta} \cap A = \emptyset,\\
    2^{|A|}(2^{|A|}+2^{|B|}-2^{|A \cap B|}-1)&if\ \boldsymbol{\beta} \cap A \neq \emptyset.
  \end{cases}
\end{eqnarray*}
In this case, when $\boldsymbol{\beta} \cap A = \emptyset$ we have the number of the $\boldsymbol{\beta}$ is $2^{n-|A|}-1$; when $\boldsymbol{\beta} \cap A \neq \emptyset$ the number of such $\boldsymbol{\beta}$ is $2^n - 2^{n-|A|}$. 

Case $3$, $\boldsymbol{\alpha}\neq {\bf 0}, \boldsymbol{\beta}= {\bf 0}$. We have
\begin{eqnarray*}
&&w_L(c_{D}(\mathbf{u}))\\
&=&|D|-2^{|A|-1}( 2^{|A|}\psi(\boldsymbol{\alpha}|A) +  2^{|B|}\psi(\boldsymbol{\alpha}|B) -  2^{|A \cap B|}\psi(\boldsymbol{\alpha}|A\cap B)-1)(1 + \psi (\boldsymbol{\alpha}|A)) \\
&=&
  \begin{cases}
    0&if\ \boldsymbol{\alpha} \in \mathcal{U}_1,\\
    2^{|A|+|B|}&if\ \boldsymbol{\alpha} \in \mathcal{U}_2,\\
    2^{|A| + |A|} + 2^{|A|+|B|-1} - 2^{|A|+|A \cap B|-1} - 2^{|A|-1}&if\ \boldsymbol{\alpha} \in \mathcal{U}_3,\\
    2^{|A|}(2^{|A|}+2^{|B|})-2^{|A|+|A \cap B|-1}- 2^{|A|-1}&if\ \boldsymbol{\alpha} \in \mathcal{U}_4,\\
    2^{|A|}(2^{|A|}+2^{|B|}-2^{|A \cap B|}) - 2^{|A|-1}&if\ \boldsymbol{\alpha} \in \mathcal{U}_i, i\in\{5,6,7,8\}.
  \end{cases}
\end{eqnarray*}

By Lemma \ref{lem31}, we have the numbers of such $\boldsymbol{\alpha}$ in $\mathcal{U}_i$, $i\in\{1,2, \ldots ,8\}$.


Case $4$, $\boldsymbol{\alpha} = \boldsymbol{\beta}\neq {\bf 0} $. We have
\begin{eqnarray*}&&w_L(c_{D}(\mathbf{u}))\\
&=&|D|-2^{|A|-1}( 2^{|A|}\psi(\boldsymbol{\alpha}|A) +  2^{|B|}\psi(\boldsymbol{\alpha}|B) -  2^{|A \cap B|}\psi(\boldsymbol{\alpha}|A\cap B)-1)(1 + \psi (\boldsymbol{\beta}|A))\\
&=&
  \begin{cases}
    0&if\ \boldsymbol{\alpha} \in \mathcal{U}_1,\\
    2^{|A|+|B|}&if\ \boldsymbol{\alpha} \in \mathcal{U}_2,\\
    2^{|A| + |A|} + 2^{|A|+|B|-1} - 2^{|A|+|A \cap B|-1} - 2^{|A|-1}&if\ \boldsymbol{\alpha} \in \mathcal{U}_3,\\
    2^{|A|}(2^{|A|}+2^{|B|})-2^{|A|+|A \cap B|-1}- 2^{|A|-1}&if\ \boldsymbol{\alpha} \in \mathcal{U}_4,\\
    2^{|A|}(2^{|A|}+2^{|B|}-2^{|A \cap B|}) - 2^{|A|-1}&if\ \boldsymbol{\alpha} \in \mathcal{U}_i, i\in\{5,6,7,8\}.
  \end{cases}
\end{eqnarray*}
Similarly, using Lemma \ref{lem31} we could find the numbers of $\boldsymbol{\alpha}$.

Case $5$, $\boldsymbol{\alpha} \neq \boldsymbol{\beta}\neq {\bf 0} $. We have the following four subcases:
\begin{enumerate}
\item  When $\boldsymbol{\beta} \cap A \neq \emptyset$, $(\boldsymbol{\alpha}+\boldsymbol{\beta}) \cap A \neq \emptyset$. We have
$w_L(c_{D}(\mathbf{u})) = 2^{|A|}(2^{|A|}+2^{|B|}-2^{|A \cap B|}-1)$. 
Note that $\boldsymbol{\beta} \cap A \neq \emptyset$ and $(\boldsymbol{\alpha}+\boldsymbol{\beta}) \cap A \neq \emptyset$ which is equivalent  to 
$\boldsymbol{\beta} \cap A \neq \emptyset $ and $ \boldsymbol{\alpha} \cap A \neq \emptyset$, the number of such $(\boldsymbol{\alpha},\boldsymbol{\beta})$ is $$(2^n-2^{n-|A|})(2^n-2^{n-|A|}-1).$$

\item When $\boldsymbol{\beta} \cap A = \emptyset$, $(\boldsymbol{\alpha}+\boldsymbol{\beta}) \cap A \neq \emptyset$. Then $w_L(c_{D}(\mathbf{u}))$ could have:
\begin{equation*}
  \begin{cases}
    2^{|A| + |A|} + 2^{|A|+|B|-1} - 2^{|A|+|A \cap B|-1} - 2^{|A|-1}&if\ \boldsymbol{\alpha} \in \mathcal{U}_3,\\
    2^{|A|}(2^{|A|}+2^{|B|})-2^{|A|+|A \cap B|-1}- 2^{|A|-1}&if\ \boldsymbol{\alpha} \in \mathcal{U}_4,\\
    2^{|A|}(2^{|A|}+2^{|B|}-2^{|A \cap B|}) - 2^{|A|-1}&if\ \boldsymbol{\alpha} \in \mathcal{U}_i, i\in\{5,6,7,8\}.
  \end{cases}
\end{equation*}
Note that $\boldsymbol{\beta} \cap A = \emptyset$ and $(\boldsymbol{\alpha}+\boldsymbol{\beta}) \cap A \neq \emptyset$ which is equivalent  to $\boldsymbol{\beta} \cap A = \emptyset $ and $ \boldsymbol{\alpha} \cap A \neq \emptyset$, so such $\boldsymbol{\alpha}$ can only in $\mathcal{U}_i$, $i \in \{3,4, \ldots ,8 \}$. Then the number of such $(\boldsymbol{\alpha},\boldsymbol{\beta})$ is $$2^{n-|A \cup B|}(2^{|A\backslash B|}-1)(2^{n-|A|}-1),$$
when $\boldsymbol{\alpha} \in \mathcal{U}_3$;   
the number of such $(\boldsymbol{\alpha},\boldsymbol{\beta})$ is $$2^{n-|A \cup B|}(2^{|A\backslash B|}-1)(2^{|B\backslash A|}-1)(2^{n-|A|}-1),$$when $\boldsymbol{\alpha} \in \mathcal{U}_4$; and the number of such $(\boldsymbol{\alpha},\boldsymbol{\beta})$ is $$(2^n-2^{n-|A \cap B|})(2^{n-|A|}-1),$$
when $\boldsymbol{\alpha} \in \mathcal{U}_i$, $i\in\{5,6,7,8\}$.

\item When $\boldsymbol{\beta} \cap A \neq \emptyset$, $(\boldsymbol{\alpha}+\boldsymbol{\beta}) \cap A = \emptyset$. Then $w_L(c_{D}(\mathbf{u}))$ could have:
\begin{equation*}
  \begin{cases}
    2^{|A| + |A|} + 2^{|A|+|B|-1} - 2^{|A|+|A \cap B|-1} - 2^{|A|-1}&if\ \boldsymbol{\alpha} \in \mathcal{U}_3,\\
    2^{|A|}(2^{|A|}+2^{|B|})-2^{|A|+|A \cap B|-1}- 2^{|A|-1}&if\ \boldsymbol{\alpha} \in \mathcal{U}_4,\\
    2^{|A|}(2^{|A|}+2^{|B|}-2^{|A \cap B|}) - 2^{|A|-1}&if\ \boldsymbol{\alpha} \in \mathcal{U}_i, i\in\{5,6,7,8\}.
  \end{cases}
\end{equation*}
Note that $\boldsymbol{\beta} \cap A \neq \emptyset$ and $(\boldsymbol{\alpha}+\boldsymbol{\beta}) \cap A = \emptyset$ which is equivalent  to $\boldsymbol{\beta} \cap A \neq \emptyset $ and $ \boldsymbol{\alpha} \cap A \neq \emptyset$ and $\boldsymbol{\beta} \cap A = \boldsymbol{\alpha} \cap A$, so such $\boldsymbol{\alpha}$ can only in $\mathcal{U}_i$, $i \in \{3,4, \ldots ,8 \}$. Then the number of such $(\boldsymbol{\alpha},\boldsymbol{\beta})$ is $$2^{n-|A \cup B|}(2^{|A\backslash B|}-1)(2^{n-|A|}-1),$$
when $\boldsymbol{\alpha} \in \mathcal{U}_3$; the number of such $(\boldsymbol{\alpha},\boldsymbol{\beta})$ is $$2^{n-|A \cup B|}(2^{|A\backslash B|}-1)(2^{|B\backslash A|}-1)(2^{n-|A|}-1),$$ 
when $\boldsymbol{\alpha} \in \mathcal{U}_4$; and the number of such $(\boldsymbol{\alpha},\boldsymbol{\beta})$ is 
$$2^{n- |B|}(2^{|A \cap B|}-1)(2^{n-|A \cup B|}(2^{|A\backslash B|}-1)(2^{|B\backslash A|}-1)+(2^{n-|A|}-1)),$$
when $\boldsymbol{\alpha} \in \mathcal{U}_i$, $i\in\{5,6,7,8\}$.

\item When $\boldsymbol{\beta} \cap A = \emptyset$, $(\boldsymbol{\alpha}+\boldsymbol{\beta}) \cap A = \emptyset$. Then $w_L(c_{D}(\mathbf{u}))$ could have:
\begin{equation*}
  \begin{cases}
    0&if\ \boldsymbol{\alpha} \in \mathcal{U}_1,\\
    2^{|A|+|B|}&if\ \boldsymbol{\alpha} \in \mathcal{U}_2.
  \end{cases}
\end{equation*}
Note that $\boldsymbol{\beta} \cap A = \emptyset$ and $(\boldsymbol{\alpha}+\boldsymbol{\beta}) \cap A = \emptyset$ which is equivalent  to $\boldsymbol{\beta} \cap A = \emptyset $ and $ \boldsymbol{\alpha} \cap A \neq \emptyset$, so such $\boldsymbol{\alpha}$ can only in $\mathcal{U}_i$, $i \in \{1,2\}$. Then the number of such $(\boldsymbol{\alpha},\boldsymbol{\beta})$ is $$(2^{n-|A \cup B|}-1)(2^{n-|A|}-2),$$ 
when $\boldsymbol{\alpha} \in \mathcal{U}_1$; the number of such $(\boldsymbol{\alpha},\boldsymbol{\beta})$ is $$2^{n-|A \cup B|}(2^{|B\backslash A|}-1)(2^{n-|A|}-2),$$
when $\boldsymbol{\alpha} \in \mathcal{U}_2$. 

\end{enumerate}

Then we get the Lee weight distribution of the code. Next we will determine  the size of the quaternary code. 

Let $D=\{ {\bf g}_1,{\bf g}_2, \ldots, {\bf g}_{|D|} \}\subseteq \mathbb Z_4^n$. Then  $\mathcal{C}_{D}$ of length $|D|$  can be defined by\begin{equation*}\mathcal{C}_{D}= \{c_{\bf u}=({\bf u\cdot g}_1, {\bf u\cdot  g}_2, \ldots, {\bf u\cdot { g}}_{|D|}): {\bf  u}\in  \mathbb Z_4^n\}, \end{equation*}  where $\cdot $ denotes the Euclidean inner product of two elements in $\mathbb Z_4^n.$  Let $G$ be the $n\times |D|$ matrix as follows:
\begin{equation*}G=[{\bf g}_1^T \; {\bf g}_2^T \; \cdots \; {\bf g}_{|D|}^T],\end{equation*}
where the column vector ${\bf g}_i^T$ denotes the transpose of a row vector ${\bf g}_i$. Note that $D = \Delta_A + 2\Delta^*$ and $\Delta_A \subseteq \Delta$. From (2.1), it is easy to check that $k_2=|B\backslash A|$ and hence $k_1=|A|$ due to the fact that the number of codewords in $\mathcal C_D$ is $2^{|A\cup B|+|A|}$. Then the code $\mathcal{C}_{D}$ is of size $4^{|A|}2^{|B\backslash A|}$.

This completes the proof.
\end{proof}

\begin{cor}\label{cor3.3}{\rm  
Notations are the same as in Theorem \ref{thm1}.
If $A \cap B = \emptyset$. Then the code  $\mathcal C_{D}$ in \eqref{eq1} has length $|D|=2^{|A|}(2^{|A|}+2^{|B|}-2)$ and size $4^{|A|}2^{|B\backslash  A| }$. Its Lee weight distribution is given in Table \ref{tabZ4WD2}.
} 
\end{cor}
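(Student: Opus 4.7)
The plan is to obtain the corollary as a direct specialization of Theorem~\ref{thm1}: simply set $A\cap B=\emptyset$ and simplify. Under this hypothesis we have $|A\cap B|=0$, $|A\cup B|=|A|+|B|$, $|A\setminus B|=|A|$, and $|B\setminus A|=|B|$. Substituting these into the length formula from Theorem~\ref{thm1} gives
\[
|D|=2^{|A|}\bigl(2^{|A|}+2^{|B|}-2^{0}-1\bigr)=2^{|A|}\bigl(2^{|A|}+2^{|B|}-2\bigr),
\]
while the size $4^{|A|}2^{|B\setminus A|}=4^{|A|}2^{|B|}$ is unchanged in form.

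Next I would go through Table~\ref{tabZ4WD1} row by row and substitute $|A\cap B|=0$, $|A\cup B|=|A|+|B|$, etc. Three rows stay essentially the same (after trivial simplification of the factor $2^{|A\cap B|}+1=2$): the nonzero Lee weight $2^{|A|}(2^{|A|}+2^{|B|}-2)$ with frequency $(2^{|A|+|B|}-2^{|B|})(2^{|A|}-1)=2^{|B|}(2^{|A|}-1)^{2}$; the weight $2^{|A|+|B|}$ with frequency $2^{|B|}-1$; the weight $2^{2|A|}+2^{|A|+|B|-1}-2^{|A|}$ with frequency $2(2^{|A|}-1)$; and the weight $2^{2|A|}+2^{|A|+|B|}-2^{|A|}$ with frequency $2(2^{|A|}-1)(2^{|B|}-1)$.

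The key observation is that the last row of Table~\ref{tabZ4WD1}, whose frequency is $2^{1+|A\cup B|-|A\cap B|}(2^{|A\cap B|}-1)$, collapses to $0$ when $|A\cap B|=0$, so that weight class is vacuous and can be dropped. Consequently the Lee weight distribution in Table~\ref{tabZ4WD2} has one fewer nonzero row than Table~\ref{tabZ4WD1}, yielding the claimed simpler (typically four-Lee-weight) distribution. Since every step is a straightforward substitution into a formula already proved in Theorem~\ref{thm1}, there is no real obstacle; the only care needed is to verify that no additional coincidences occur among the remaining weights (i.e., that the listed Lee weights really are pairwise distinct for generic $|A|,|B|$), which one can check by direct inspection of the simplified expressions.
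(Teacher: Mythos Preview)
Your proposal is correct and is exactly the approach the paper takes: Corollary~\ref{cor3.3} is stated without a separate proof and is meant to be read as the direct specialization of Theorem~\ref{thm1} to $A\cap B=\emptyset$, with the last row of Table~\ref{tabZ4WD1} vanishing because its frequency factor $2^{|A\cap B|}-1$ becomes $0$. (One trivial slip: you wrote ``three rows'' but then listed four surviving nonzero-weight rows; the latter count is the right one.)
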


\begin{table}[h]  
  \caption{Lee weight distribution of $\mathcal C_{D}$ in Corollary 3.3 }   
  \begin{tabu} to 0.8\textwidth{X[1.7,c]|X[2,c]}  
  \hline 
  \rm{Lee Weight}&\rm{Frequency}\\ 
  \hline
  $0$&$ 1$\\ 
  \hline
  $2^{|A|}(2^{|A|}+2^{|B|}-2)$&$2^{| B|}(2^{|A|}-1)^2$\\ 
  \hline
  $2^{|A|+|B|}$&$ (2^{|B|}-1)$\\ 
  \hline
  $2^{2|A| } + 2^{|A|+|B|-1}-2^{|A|} $&$
  2(2^{|A|}-1)
  $\\ 
  \hline
  $2^{|A|}(2^{|A|}+2^{|B|})-2^{|A|}$&$2(2^{|A|}-1)(2^{|B|}-1)$\\ 
  \hline

  \end{tabu}
  \label{tabZ4WD2}  
\end{table}

Here are some examples to illustrate Theorem \ref{thm1} as {follows}:

\begin{exa}\label{exa3.2}{\rm
 Let $n = 3$, $A = \{1\}$ and $B = \{2, 3\}$, $\Delta$ be generated by $A$ and $B$, and $D=\Delta_A+2\Delta^*$. Then the $\mathcal{C}_{D}$ in (1.1) is a three-Lee-weight quaternary code with length $ 8$, size $4^12^2$ and Lee minimum distance $ 6$. Its weight distribution is given by  $1+2z^{6}+7z^{8}+6z^{10}$, which is confirmed  by Magma. 
According to the online table \cite{AAD,Z4}, the best known quaternary linear code with length $ 8$ size $4^12^2$ has Lee minimum distance $8$.  
}
    
\end{exa}

\begin{exa}\label{exa3.1}{\rm
 Let $n = 3$, $A = \{1,2\}$ and $B = \{2, 3\}$, $\Delta$ be generated by $A$ and $B$, and $D=\Delta_A+2\Delta^*$. Then the $\mathcal{C}_{D}$ in (1.1) is a five-Lee-weight quaternary code with length $20$, size $4^22^1$ and Lee minimum distance $16$. Its Lee weight distribution is given by  $1+z^{16}+2z^{18}+18z^{20}+8z^{22}+2z^{26}$, which is confirmed  by Magma. According to the online table \cite{AAD,Z4}, the code is new. %

}
    
\end{exa}

\begin{exa}\label{exa3.4}{\rm
Let $n = 4$, $A = \{1, 2\}$, $B = \{3, 4\}$, $\Delta$ be generated by $A$ and $B$, and $D=\Delta_A+2\Delta^*$. Then the  $\mathcal{C}_{D}$ in (1.1) is a four-Lee-weight quaternary linear code with parameters length $24$ and size $4^22^2$, and Lee minimum distance $16$. 
Its weight distribution is given by $1+3z^{16}+6z^{20}+36z^{24}+18z^{28}$, which is confirmed  by Magma. According to the online table \cite{AAD,Z4}, the code is new. 


}
\end{exa}

\begin{exa}\label{exa3.3}{\rm
Let $n = 5$, $A = \{2, 3\}$, $B = \{3, 4, 5\}$, $\Delta$ be generated by $A$ and $B$, and $D = \Delta_A + 2\Delta^*$. Then the $\mathcal{C}_{D}$ in (1.1) is a five-Lee-weight quaternary linear code with parameters length $36$ and size $4^22^2$, and Lee minimum distance $26$. 
Its weight distribution is given by $1+2z^{26}+3z^{32}+36z^{36}+16z^{38}+6z^{42}$, which is confirmed  by Magma. According to the online table \cite{AAD,Z4}, the code is new.


}
\end{exa}

\begin{thm}\label{thm2} {\rm
Let $\emptyset \neq A,B \subseteq [n]$, $n \geq 2$ be an integer and $\Delta$ a simplicial complex of $\mathbb{F}_2^n$ with two maximal elements $A,B$, let $D = \Bbb Z_2^n + 2\Delta^*$. Then the $\mathcal C_{D}$ is a quaternary code of length $|D|=2^{n}(2^{|A|}+2^{|B|}-2^{|A \cap B|}-1)$ and size $4^{n}$. Its Lee weight distribution is given in Table \ref{tabthm2}.
 }
\end{thm}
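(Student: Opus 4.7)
The plan is to follow the template of Theorem \ref{thm1} with the new choice $D_1 = \mathbb{Z}_2^n$ and $D_2 = \Delta^* = \Delta\setminus\{\mathbf{0}\}$, where $\Delta = \Delta_A \cup \Delta_B$ has maximal elements $A$ and $B$. The length $|D| = |D_1|\cdot|D_2| = 2^n(|\Delta|-1) = 2^n(2^{|A|}+2^{|B|}-2^{|A\cap B|}-1)$ follows at once from Lemma \ref{lem2.1}, and the size is $4^n$ because the defining set contains all of $\mathbb{Z}_2^n$, which forces the standard generator-matrix description to have $k_1 = n$ and $k_2 = 0$.

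The main simplification over Theorem \ref{thm1} is that the character sum over $\mathbb{Z}_2^n$ collapses to a delta function: $\sum_{\mathbf{t}_1\in\mathbb{Z}_2^n}(-1)^{\boldsymbol{\gamma}\cdot\mathbf{t}_1}$ equals $2^n$ if $\boldsymbol{\gamma}=\mathbf{0}$ and $0$ otherwise, while Lemma \ref{lem2.1} combined with \eqref{eq4} yields
\[
\sum_{\mathbf{t}_2\in\Delta^*}(-1)^{\boldsymbol{\alpha}\cdot\mathbf{t}_2} = 2^{|A|}\psi(\boldsymbol{\alpha}|A) + 2^{|B|}\psi(\boldsymbol{\alpha}|B) - 2^{|A\cap B|}\psi(\boldsymbol{\alpha}|A\cap B) - 1.
\]
Inserting these into \eqref{eq3} splits the analysis into four regimes depending on whether $\boldsymbol{\beta}$ and $\boldsymbol{\alpha}+\boldsymbol{\beta}$ vanish: (i) $\boldsymbol{\alpha}=\boldsymbol{\beta}=\mathbf{0}$ gives the zero codeword; (ii) $\boldsymbol{\beta}\neq\mathbf{0}$ and $\boldsymbol{\alpha}\neq\boldsymbol{\beta}$, in which case both inner $\mathbb{Z}_2^n$-sums vanish and the Lee weight is exactly $|D|$, contributing the frequency $(2^n-1)^2$; (iii) $\boldsymbol{\beta}=\mathbf{0}$, $\boldsymbol{\alpha}\neq\mathbf{0}$; and (iv) $\boldsymbol{\alpha}=\boldsymbol{\beta}\neq\mathbf{0}$. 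The latter two regimes produce the same weight expression $|D| - 2^{n-1}\bigl(2^{|A|}\psi(\boldsymbol{\alpha}|A)+2^{|B|}\psi(\boldsymbol{\alpha}|B)-2^{|A\cap B|}\psi(\boldsymbol{\alpha}|A\cap B)-1\bigr)$ as a function of $\boldsymbol{\alpha}$ alone, so their frequencies add, explaining the factor $2$ that will appear in Table \ref{tabthm2}.

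The only place that requires real bookkeeping is partitioning the nonzero $\boldsymbol{\alpha}$'s into the eight classes $\mathcal{U}_1,\ldots,\mathcal{U}_8$ of Lemma \ref{lem31} and reading off the triple $(\psi(\boldsymbol{\alpha}|A),\psi(\boldsymbol{\alpha}|B),\psi(\boldsymbol{\alpha}|A\cap B))$ on each class; since these indicator values are constant on each $\mathcal{U}_i$, each class contributes a single Lee weight. A useful collapse occurs on $\mathcal{U}_5,\mathcal{U}_6,\mathcal{U}_7,\mathcal{U}_8$, where all three $\psi$-values vanish and a single common weight $|D|+2^{n-1}$ arises; summing the four sizes via Lemma \ref{lem31} telescopes to $2^n-2^{n-|A\cap B|}$. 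After doubling each class count (to absorb cases (iii) and (iv)) and combining with the single zero codeword together with the $(2^n-1)^2$ codewords of weight $|D|$, a final consistency check that the frequencies sum to $4^n$ simultaneously confirms the size claim and validates the frequency column of Table \ref{tabthm2}.
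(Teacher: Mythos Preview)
Your proposal is correct and follows essentially the same approach as the paper's own proof: both substitute $D_1=\mathbb{Z}_2^n$, $D_2=\Delta^*$ into \eqref{eq3}, use the delta-function collapse of the $\mathbb{Z}_2^n$-sum together with \eqref{eq4} for the $\Delta^*$-sum, and then partition the nonzero $\boldsymbol{\alpha}$'s via Lemma~\ref{lem31}. Your four-regime organization is marginally tidier than the paper's five cases (you merge the paper's Case~2 and Case~5 into your single regime~(ii), which directly yields the $(2^n-1)^2$ frequency without a separate addition), but this is a cosmetic streamlining rather than a different method.
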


\begin{table}[h]  
  \caption{Lee weight distribution of $\mathcal C_{D}$ in Theorem \ref{thm2}}   
  \begin{tabu} to 1\textwidth{X[1.7,c]|X[1.6,c]}  
  \hline 
  \rm{Lee Weight}&\rm{Frequency}\\ 
  \hline
  $0$&$1$\\ 
  \hline
  $2^{n}(2^{|A|}+2^{|B|}-2^{|A \cap B|}-1)$&$(2^{n}-1)(2^{n}-1)$\\ 
  \hline
  $2^{n-1}(2^{|A|}+2^{|B|}-2^{|A \cap B|}-1)$&$2(2^{n-|A \cup B|}-1)$\\ 
  \hline 
  $2^{n+|B|}  +2^{n-1}(2^{|A|}-2^{|A \cap B|}-1)$&$2^{1+n-|A \cup B|}(2^{|B\backslash A|}-1)$\\ 
  \hline
  $2^{n+|A|}  +2^{n-1}(2^{|B|}-2^{|A \cap B|}-1)$&$2^{1+n-|A \cup B|}(2^{|A\backslash B|}-1)$\\ 
  \hline
  $2^{n}(2^{|A|}+2^{|B|}) -2^{n-1}(2^{|A \cap B|}+1)$&$2^{1+n-|A \cup B|}(2^{|A\backslash B|}-1)(2^{|B\backslash A|}-1)$\\ 
  \hline
  $2^{n}(2^{|A|}+2^{|B|}-2^{|A \cap B|})-2^{n-1}$&$2(2^n-2^{n-|A\cap B|})$\\ 
  \hline
  
  \end{tabu}
  \label{tabthm2}  
\end{table}

\begin{proof}

By Equations \eqref{eq3} and \eqref{eq4}
\begin{eqnarray*} 
&&w_L(c_{D}(\mathbf{u}))\\
&=&|D|-\frac12\sum_{\mathbf t_2\in D_{2}}(-1)^{\boldsymbol{\alpha}\mathbf{t}_2}(\sum_{\mathbf {t}_1\in  D_{1}}(-1)^{\boldsymbol{\beta}\mathbf{t}_1}+\sum_{\mathbf {t}_1\in  D_{1}}(-1)^{(\boldsymbol{\alpha}+\boldsymbol{\beta}){\bf t}_1})\nonumber\\
&=&|D|-2^{n-1}( 2^{|A|}\psi(\boldsymbol{\alpha}|A) +  2^{|B|}\psi(\boldsymbol{\alpha}|B) -  2^{|A \cap B|}\psi(\boldsymbol{\alpha}|A\cap B)-1)(\delta_{\mathbf{0},\boldsymbol{\beta}}+\delta_{\mathbf{0},(\boldsymbol{\alpha}+\boldsymbol{\beta})} ) .
\end{eqnarray*} 

Next we should consider the following cases:

Case $1$, $\boldsymbol{\alpha} = \boldsymbol{\beta} = {\bf 0}$. We have 
$w_L(c_{D}(\mathbf{u}))=0$.

Case $2$, $\boldsymbol{\alpha}  = {\bf 0}, \boldsymbol{\beta}\neq {\bf 0}$. We have
$w_L(c_{D}(\mathbf{u}))=2^{n}(2^{|A|}+2^{|B|}-2^{|A \cap B|}-1)$.

Case $3$, $\boldsymbol{\alpha}\neq {\bf 0}, \boldsymbol{\beta}= {\bf 0}$. We have $w_L(c_{D}(\mathbf{u}))\\
=|D|-2^{n-1}( 2^{|A|}\psi(\boldsymbol{\alpha}|A) +  2^{|B|}\psi(\boldsymbol{\alpha}|B) -  2^{|A \cap B|}\psi(\boldsymbol{\alpha}|A\cap B)-1) $  
\begin{equation*}
=
  \begin{cases}
    2^{n-1}(2^{|A|}+2^{|B|}-2^{|A \cap B|}-1)&if\ \boldsymbol{\alpha} \in \mathcal{U}_1,\\
    2^{n}(2^{|A|}+2^{|B|}-2^{|A \cap B|}-1)-2^{n-1}(2^{|A|}-2^{|A \cap B|}-1)&if\ \boldsymbol{\alpha} \in \mathcal{U}_2,\\
    2^{n}(2^{|A|}+2^{|B|}-2^{|A \cap B|}-1)-2^{n-1}(2^{|B|}-2^{|A \cap B|}-1)&if\ \boldsymbol{\alpha} \in \mathcal{U}_3,\\
    2^{n}(2^{|A|}+2^{|B|}-2^{|A \cap B|}-1)+2^{n-1}(2^{|A \cap B|}+1)&if\ \boldsymbol{\alpha} \in \mathcal{U}_4,\\
    2^{n}(2^{|A|}+2^{|B|}-2^{|A \cap B|}-1)+2^{n-1}&if\ \boldsymbol{\alpha} \in \mathcal{U}_i, i\in\{5,6,7,8\}.
  \end{cases}
\end{equation*}

By Lemma \ref{lem31}, we have the numbers of such $\boldsymbol{\alpha}$ in $\mathcal{U}_i$, $i\in\{1,2, \ldots ,8\}$.

Case $4$, $\boldsymbol{\alpha} = \boldsymbol{\beta}\neq {\bf 0} $. We have
$w_L(c_{D}(\mathbf{u}))\\
=|D|-2^{n-1}( 2^{|A|}\psi(\boldsymbol{\alpha}|A) +  2^{|B|}\psi(\boldsymbol{\alpha}|B) -  2^{|A \cap B|}\psi(\boldsymbol{\alpha}|A\cap B)-1)$  
\begin{equation*}
=
  \begin{cases}
    2^{n-1}(2^{|A|}+2^{|B|}-2^{|A \cap B|}-1)&if\ \boldsymbol{\alpha} \in \mathcal{U}_1,\\
    2^{n}(2^{|A|}+2^{|B|}-2^{|A \cap B|}-1)-2^{n-1}(2^{|A|}-2^{|A \cap B|}-1)&if\ \boldsymbol{\alpha} \in \mathcal{U}_2,\\
    2^{n}(2^{|A|}+2^{|B|}-2^{|A \cap B|}-1)-2^{n-1}(2^{|B|}-2^{|A \cap B|}-1)&if\ \boldsymbol{\alpha} \in \mathcal{U}_3,\\
    2^{n}(2^{|A|}+2^{|B|}-2^{|A \cap B|}-1)+2^{n-1}(2^{|A \cap B|}+1)&if\ \boldsymbol{\alpha} \in \mathcal{U}_4,\\
    2^{n}(2^{|A|}+2^{|B|}-2^{|A \cap B|}-1)+2^{n-1}&if\ \boldsymbol{\alpha} \in \mathcal{U}_i, i\in\{5,6,7,8\}.
  \end{cases}
\end{equation*}
Similarly, the numbers of each $\boldsymbol{\alpha}$ can also be found by Lemma \ref{lem31}.

Case $5$, $\boldsymbol{\alpha} \neq \boldsymbol{\beta} \neq {\bf 0} $. We have 
$w_L(c_{D}(\mathbf{u}))=2^{n}(2^{|A|}+2^{|B|}-2^{|A \cap B|}-1)$. Then the number of such $(\boldsymbol{\alpha},\boldsymbol{\beta})$ is $(2^{n}-1)(2^{n}-2).$

{Similarly} to Theorem 3.2, it is easy to check that the code has size $4^n$. 
This completes the proof.
\end{proof}

\begin{table}[h]  
  \caption{Lee weight distribution of $\mathcal C_{D}$ in Corollary \ref{cor3}}   
  \begin{tabu} to 1\textwidth{X[1.7,c]|X[1.6,c]}  
  \hline 
  \rm{Lee Weight}&\rm{Frequency}\\ 
  \hline
  $0$&$1$\\ 
  \hline
  $2^{n}(2^{|A|}+2^{|B|}-2^{|A \cap B|}-1)$&$(2^{n}-1)(2^{n}-1)$\\ 
  \hline 
  $2^{n+|B|}  +2^{n-1}(2^{|A|}-2^{|A \cap B|}-1)$&$2^{}(2^{|B\backslash A|}-1)$\\ 
  \hline
  $2^{n+|A|}  +2^{n-1}(2^{|B|}-2^{|A \cap B|}-1)$&$2^{}(2^{|A\backslash B|}-1)$\\ 
  \hline
  $2^{n}(2^{|A|}+2^{|B|}) -2^{n-1}(2^{|A \cap B|}+1)$&$2^{}(2^{|A\backslash B|}-1)(2^{|B\backslash A|}-1)$\\ 
  \hline
  $2^{n}(2^{|A|}+2^{|B|}-2^{|A \cap B|})-2^{n-1}$&$2(2^n-2^{n-|A\cap B|})$\\ 
  \hline
  
  \end{tabu}
  \label{tabcor3}  
\end{table}

\begin{cor}
\label{cor3} {\rm In Theorem \ref{thm2}  
if $|A \cup B| = n$, then the quaternary code $\mathcal{C}_{D}$ is a five-Lee-weight code and its Lee weight distribution is given in Table \ref{tabcor3}. }
\end{cor}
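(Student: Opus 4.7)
The plan is to derive the corollary as a direct specialization of Theorem \ref{thm2}, reading off Table \ref{tabcor3} from Table \ref{tabthm2} under the hypothesis $|A \cup B| = n$. The Lee weights themselves in Table \ref{tabthm2} do not depend on $n - |A \cup B|$, so they transfer verbatim; only the frequencies need to be recomputed.

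The key observation is that the third row of Table \ref{tabthm2}, with weight $2^{n-1}(2^{|A|} + 2^{|B|} - 2^{|A \cap B|} - 1)$, has frequency $2(2^{n - |A \cup B|} - 1)$, which collapses to $2(2^0 - 1) = 0$ under the hypothesis. Thus this weight class is empty and disappears from the distribution, which is precisely what reduces the count from six nonzero weights to five. For rows four, five, and six, the factor $2^{1 + n - |A \cup B|}$ reduces to $2$, producing the frequencies $2(2^{|B \backslash A|} - 1)$, $2(2^{|A \backslash B|} - 1)$, and $2(2^{|A \backslash B|} - 1)(2^{|B \backslash A|} - 1)$ recorded in Table \ref{tabcor3}. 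The frequencies appearing in rows two and seven contain no factor involving $n - |A \cup B|$ and carry over unchanged.

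The only subtlety is to confirm that the surviving five nonzero weights are genuinely pairwise distinct, so that no further merging of rows can occur and the code has exactly five (not fewer) Lee weights. This amounts to a finite comparison among explicit integer expressions in $|A|$, $|B|$, and $|A \cap B|$, and is routine; I expect this bookkeeping to be the most laborious step, but it is not a conceptual obstacle. Once the substitutions above are performed and the distinctness check is dispatched, the corollary follows immediately from Theorem \ref{thm2}.
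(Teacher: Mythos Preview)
Your approach is exactly the paper's (implicit) one: the paper gives no separate proof for this corollary, treating it as an immediate specialization of Theorem~\ref{thm2}, and your plan---substitute $|A\cup B|=n$ into Table~\ref{tabthm2}, watch the row with frequency $2(2^{n-|A\cup B|}-1)$ vanish, and simplify the remaining factors $2^{1+n-|A\cup B|}$ to $2$---is precisely that specialization.

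One caution about the step you label ``routine'': the five surviving weights are \emph{not} always pairwise distinct. When $|A|=|B|$ the two rows with weights $2^{n+|B|}+2^{n-1}(2^{|A|}-2^{|A\cap B|}-1)$ and $2^{n+|A|}+2^{n-1}(2^{|B|}-2^{|A\cap B|}-1)$ coincide, and the code is then only four-Lee-weight. This is exactly what happens in the paper's own Example~\ref{exa3.5} ($n=3$, $A=\{1,2\}$, $B=\{2,3\}$), whose enumerator $1+4z^{36}+49z^{40}+8z^{44}+2z^{52}$ has four nonzero terms. So the ``five-Lee-weight'' assertion in the statement should be read as ``at most five'', and your distinctness check will detect this rather than confirm strict distinctness. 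This is an imprecision in the paper's wording, not a gap in your method.
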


\begin{cor}
\label{cor4} {\rm 
In Theorem \ref{thm2} if $|A \cup B| = n$ and $A \cap B= \emptyset $. Then the quaternary code $\mathcal{C}_{D}$ is a four-Lee-weight code and its Lee weight distribution is given in Table \ref{tabcor4}. }
\end{cor}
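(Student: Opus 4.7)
The plan is to deduce this corollary directly from Corollary~\ref{cor3} by specializing to $A\cap B=\emptyset$, rather than re-running the character-sum computation in the proof of Theorem~\ref{thm2}. Under this hypothesis $|A\cap B|=0$, so $2^{|A\cap B|}=1$ and $2^{n-|A\cap B|}=2^n$; combined with the standing assumption $|A\cup B|=n$, this also yields $n=|A|+|B|$, $|A\setminus B|=|A|$, and $|B\setminus A|=|B|$.

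The first step is to substitute these identities into the six rows of Table~\ref{tabcor3}. The key observation is that the last row carries the frequency $2(2^n-2^{n-|A\cap B|})$, which collapses to $0$ exactly when $A\cap B=\emptyset$. Hence one of the five nonzero Lee weights of Corollary~\ref{cor3} drops out of the distribution automatically.

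The second step is to simplify the weight expressions and frequencies of the four remaining nonzero rows and read off the entries of Table~\ref{tabcor4}. A short verification shows that each of the four resulting frequencies is strictly positive (using only $A,B\neq\emptyset$) and that the four weight values are pairwise distinct in the generic case, so the code has exactly four nonzero Lee weights.

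No serious obstacle is anticipated: once Corollary~\ref{cor3} is in hand, the argument is essentially bookkeeping, with the only substantive point being that the hypothesis $A\cap B=\emptyset$ is precisely what annihilates the extra weight. If one preferred to bypass Corollary~\ref{cor3}, the same conclusion would follow by revisiting the $\mathcal{U}_i$-subcases of Cases~3 and~4 in the proof of Theorem~\ref{thm2}: when $A\cap B=\emptyset$, the sets $\mathcal{U}_5,\mathcal{U}_6,\mathcal{U}_7,\mathcal{U}_8$ of Lemma~\ref{lem31} are all empty, which directly eliminates the corresponding Lee weight contribution.
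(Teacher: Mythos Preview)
Your proposal is correct and matches the paper's approach: the paper gives no explicit proof of this corollary, treating it as an immediate specialization of Corollary~\ref{cor3} (equivalently of Theorem~\ref{thm2}) under $|A\cup B|=n$ and $A\cap B=\emptyset$, and your substitution $2^{|A\cap B|}=1$, which kills the last row of Table~\ref{tabcor3}, is exactly the intended mechanism. Your parenthetical remark that the four surviving weights are distinct only ``in the generic case'' is a fair caveat; note for instance that when $|A|=|B|$ rows three and four of Table~\ref{tabcor4} coincide, so the paper's phrase ``four-Lee-weight'' should be read as ``at most four nonzero Lee weights''.
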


\begin{table}[h]  
  \caption{Lee weight distribution of $\mathcal C_{D}$ in Corollary \ref{cor4}}   
  \begin{tabu} to 1\textwidth{X[1.7,c]|X[1.6,c]}  
  \hline 
  \rm{Lee Weight}&\rm{Frequency}\\ 
  \hline
  $0$&$1$\\ 
  \hline
  $2^{n}(2^{|A|}+2^{|B|}-2)$&$(2^{n}-1)(2^{n}-1)$\\ 
  \hline 
  $2^{n+|B|}  +2^{n-1}(2^{|A|}-2)$&$2^{}(2^{|B\backslash A|}-1)$\\ 
  \hline
  $2^{n+|A|}  +2^{n-1}(2^{|B|}-2)$&$2^{}(2^{|A\backslash B|}-1)$\\ 
  \hline
  $2^{n}(2^{|A|}+2^{|B|}-1) $&$2^{}(2^{|A\backslash B|}-1)(2^{|B\backslash A|}-1)$\\ 
  \hline
  
  \end{tabu}
  \label{tabcor4}  
\end{table}

Here are some examples to illustrate Theorem \ref{thm2} as {follows}:

\begin{exa}\label{exa3.5} {\rm
Let $n = 3$, $A = \{1,2\}$, $B = \{2, 3\}$, $\Delta$ be generated by $A$ and $B$, and $D = \Bbb Z_2^n + 2\Delta^*$. Then the $\mathcal{C}_{D}$ in (1.1) is a four-Lee-weight quaternary code with  length $ 40$, size $4^3$ and Lee minimum distance $36$.
Its weight distribution is given by $1+4z^{36}+49z^{40}+8z^{44}+2z^{52}$, which is confirmed  by Magma.  According to the online table \cite{AAD,Z4}, the code is new.

}
\end{exa}

\begin{exa}{\rm
Let $n = 4$, $A = \{1,2\}$, $B = \{2,3\}$, $\Delta$ be generated by $A$ and $B$, and $D = \Bbb Z_2^n + 2\Delta^*$. Then the $\mathcal{C}_{D}$ in (1.1) is a five-Lee-weight quaternary linear code with parameters length $80$ and size $4^4$, and Lee minimum distance $40$. 
Its weight distribution is given by $1+2z^{40}+8z^{72}+225z^{80}+16z^{88}+4z^{104}$, which is confirmed  by Magma. According to the online table \cite{AAD,Z4}, the code is new. 

}
\end{exa}





\section{ Binary images and Examples}

Recall the Gray map $\phi$ introduced in Section $2$, which is  a distance-preserving map from $(\mathbb{Z}_4^n , d_L)$ to $(\mathbb{Z}_2^{2n} , d_H)$ and is also a weight-preserving map.  Based on the Gray map $\phi$, {in} this section we will determine when the quaternary codes obtained in Section 3 have linear Gray images. Moreover, we will also present several minimal binary codes  and  provide examples of minimal binary code and their induced secret sharing schemes.

For a binary code of length $n$, size $M$ and Hamming minimum distance $d$,
we usually denote it by an $(n, M, d)$ code.

\begin{thm}\label{thm3}{\rm
Assume that $\mathcal{C}_{D}$ is as in Theorem \ref{thm1}. Then  $\phi (\mathcal{C}_{D})$ is a binary code with parameters
$$(2^{|A|+1}(2^{|A|}+2^{|B|}-2^{|A \cap B|}-1), 2^{|A \cup B| + |A|},d),$$
 where
\begin{equation*}
d=
  \begin{cases}
    2^{|A|+|B|}&{ if} \ B\not\subseteq A,  |A| \geq |B|,\\
       &  {or}\ B\subset A ,\\
    
    2^{2|A|}+2^{|A|-1}(2^{|B|}-2^{|A \cap B|}-1)&{ if} \ B\not\subseteq A,  |A| < |B|,\\

  2^{|A|}(2^{|A|}+2^{|B|}-2^{|A \cap B|}-1)  & if B  = A.\\
    
  \end{cases}
\end{equation*}

}
\end{thm}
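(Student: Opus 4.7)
The plan is to exploit the isometry property of the Gray map to reduce the problem to computing the minimum Lee weight of $\mathcal{C}_D$. Since $\phi$ is a distance-preserving bijection from $(\mathbb{Z}_4^{|D|}, d_L)$ to $(\mathbb{Z}_2^{2|D|}, d_H)$, the code $\phi(\mathcal{C}_D)$ has length $2|D| = 2^{|A|+1}(2^{|A|}+2^{|B|}-2^{|A\cap B|}-1)$, the same cardinality $4^{|A|}2^{|B\setminus A|} = 2^{|A\cup B|+|A|}$ as $\mathcal{C}_D$, and minimum Hamming distance equal to the minimum Lee distance of $\mathcal{C}_D$. The length and size claims then drop out of Theorem \ref{thm1} immediately, so the only substantive task is identifying the minimum distance $d$.

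My strategy for $d$ is to read the five nonzero Lee weights $w_1,\dots,w_5$ directly from Table \ref{tabZ4WD1} and then determine, in each configuration of $A$ and $B$, which of them actually occur (i.e. have positive frequency) before comparing the survivors. Inspecting the frequency column of Table \ref{tabZ4WD1}: $w_1$ is always present (as $A\neq\emptyset$); the frequency of $w_2$ vanishes precisely when $B\subseteq A$, that of $w_3$ when $A\subseteq B$, that of $w_4$ whenever one of $A, B$ is contained in the other, and that of $w_5$ when $A\cap B=\emptyset$.

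The next step is the algebraic comparison of the $w_i$'s. Straightforward manipulation yields $w_5 - w_1 = 2^{|A|-1}$ and $w_4 - w_1 = 2^{|A|-1}(2^{|A\cap B|}+1)$, so $w_4,w_5 > w_1$ unconditionally and hence never realise the minimum. Similarly one computes $w_3 - w_1 = 2^{|A|-1}(2^{|A\cap B|} + 1 - 2^{|B|})$ and $w_2 - w_1 = 2^{|A|}(1 + 2^{|A\cap B|} - 2^{|A|})$, so $w_3 \le w_1$ iff $B \not\subseteq A$ and $w_2 \le w_1$ iff $A \not\subseteq B$. Finally, the crucial comparison is $w_2 - w_3 = 2^{|A|-1}(1 + 2^{|A\cap B|} + 2^{|B|} - 2^{|A|+1})$, whose sign is nonpositive iff $|A|\ge |B|$ (using that $A, B$ are incomparable whenever both $w_2$ and $w_3$ are present).

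The proof concludes by combining the survival inventory with these comparisons on a case-by-case basis. When $B = A$, only $w_1$ and $w_5$ appear and $w_1 < w_5$, giving $d = w_1 = 2^{|A|}(2^{|A|}+2^{|B|}-2^{|A\cap B|}-1)$ as claimed. When $A$ and $B$ are incomparable with $|A|\ge |B|$, all of $w_1,w_2,w_3$ are present and the smallest among them is $w_2 = 2^{|A|+|B|}$. When $A$ and $B$ are incomparable with $|A|<|B|$, the minimum is instead $w_3$, which after simplification matches $2^{2|A|} + 2^{|A|-1}(2^{|B|} - 2^{|A\cap B|} - 1)$. I expect the main obstacle to be the bookkeeping in the case split, namely keeping careful track in each branch of which $w_i$ are actually in the code; an uncritical application of Table \ref{tabZ4WD1} would mistakenly allow $w_2$ or $w_3$ to serve as the minimum in configurations where their frequencies are zero, so the vanishing conditions must be verified before appealing to the algebraic comparisons.
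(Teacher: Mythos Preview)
Your reduction via the Gray isometry is exactly the paper's argument: the paper simply notes that $\phi$ is weight-preserving, so the length doubles, the size is unchanged, and $d$ equals the minimum Lee weight already determined in Theorem~\ref{thm1}. The paper then stops, leaving the actual extraction of $d$ from Table~\ref{tabZ4WD1} entirely to the reader; you go further and carry out that extraction explicitly, which is a welcome addition.

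There is, however, a gap in your concluding case split. You treat only $A=B$ and the two incomparable configurations, but the statement of the theorem also covers $B\subsetneq A$ (lumped into the first branch) and $A\subsetneq B$ (a subcase of the second branch, since then $B\not\subseteq A$ and $|A|<|B|$). These two proper-containment cases are never addressed in your write-up. Your own frequency inventory already shows that in each of them one of $w_2,w_3$ has zero frequency, and your iff computations $w_3\le w_1\Leftrightarrow B\not\subseteq A$ and $w_2\le w_1\Leftrightarrow A\not\subseteq B$ show that the surviving one of $w_2,w_3$ then exceeds $w_1$; so following your method through gives $d=w_1$ in both containment cases. You should either complete the argument for these cases and reconcile that value with the formulas in the statement, or argue explicitly that the hypothesis ``$\Delta$ has the two maximal elements $A,B$'' in Theorem~\ref{thm1} excludes them.
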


\begin{proof} In Theorem \ref{thm1}, the parameters of the quaternary code $\mathcal{C}_{D}$ are all determined. The Gray map $\phi$ is a weight-preserving one, hence the Hamming minimum distance  of $\phi (\mathcal{C}_{D})$  is same as the Lee minimum distance of $\mathcal{C}_{D}$. Namely, for any nonzero codeword ${\bf c}={\bf c}_1+2{\bf c}_2\in \mathcal{C}_{D}$, then $w_L({\bf c})=w_H({\bf c}_2)+w_H({\bf c}_1+{\bf c}_2)=w_H(({\bf c}_2, {\bf c}_1+{\bf c}_2))=w_H(\phi({\bf c}))$.  However, the length of the binary code is doubled compared to the quaternary code. This completes the proof.
\end{proof}

Similarly to Theorem \ref{thm3}, we have the following result.

\begin{thm}\label{thm4}{\rm
Assume that $\mathcal{C}_{D}$ is as in Theorem \ref{thm2}. 

(1) If $|A\cup B| <n$, then  $\phi (\mathcal{C}_{D})$ is a binary code with parameters 
$$(2^{n+1}(2^{|A|}+2^{|B|}-2^{|A \cap B|}-1),2^{2n},2^{n-1}(2^{|A|}+2^{|B|}-2^{|A \cap B|}-1)).$$

(2) If $|A\cup B| =n$, then  $\phi (\mathcal{C}_{D})$ is a binary  code with parameters 
$$(2^{n+1}(2^{|A|}+2^{|B|}-2^{|A \cap B|}-1),2^{2n},d),$$
 where
\begin{equation*}
d=
  \begin{cases}
  2^{n}(2^{|A|}+2^{|B|}-2^{|A \cap B|}-1)& if A=[n] ~or~ B=[n],\\
  
    2^{n-1}(2^{|B|+1}+2^{|A|}-2^{|A \cap B|}-1)&{if}\  0<|B| \le  |A| <n,\\
    2^{n-1}(2^{|A|+1}+2^{|B|}-2^{|A \cap B|}-1)&{if}\ 0< |A| \le  |B| <n.\\
  \end{cases}
\end{equation*}

}
\end{thm}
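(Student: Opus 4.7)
The plan is to follow the template already used for Theorem \ref{thm3}: the Gray map $\phi$ is an injective, weight-preserving map from $\mathbb Z_4^n$ to $\mathbb Z_2^{2n}$, so $\phi(\mathcal C_D)$ has size $|\mathcal C_D|=4^n=2^{2n}$, length $2|D|=2^{n+1}(2^{|A|}+2^{|B|}-2^{|A\cap B|}-1)$, and minimum Hamming weight equal to the minimum Lee weight of $\mathcal C_D$. Only the latter requires analysis, and by Theorem \ref{thm2} and Corollary \ref{cor3} the complete Lee weight spectrum is already laid out in Tables \ref{tabthm2} and \ref{tabcor3}. The task is then just to read the minimum off each of the two tables in the two stated regimes.

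For statement (1), when $|A\cup B|<n$, the row of Table \ref{tabthm2} of weight $2^{n-1}(2^{|A|}+2^{|B|}-2^{|A\cap B|}-1)$ has positive frequency $2(2^{n-|A\cup B|}-1)>0$, and it is exactly half of the row immediately above it and so a fortiori smaller. A direct subtraction
\begin{equation*}
2^{n-1}(2^{|B|+1}+2^{|A|}-2^{|A\cap B|}-1)-2^{n-1}(2^{|A|}+2^{|B|}-2^{|A\cap B|}-1)=2^{n-1}\cdot 2^{|B|}>0
\end{equation*}
handles the third row, and the analogous pairings with the fourth, fifth and sixth rows are equally routine (each reduces to a positive power of two). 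Hence the minimum Lee weight is $2^{n-1}(2^{|A|}+2^{|B|}-2^{|A\cap B|}-1)$, as claimed.

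For statement (2), the hypothesis $|A\cup B|=n$ kills the frequency of that same row, so the minimum must be chosen from the five nonzero weights of Table \ref{tabcor3}. I split according to whether one of $A,B$ equals $[n]$. If $A=[n]$ (the $B=[n]$ case being symmetric), then $A\cap B=B$ and $|B\setminus A|=0$, which kills the third and fifth rows and collapses the first row to $2^n(2^n-1)$ and the fourth and sixth rows to $2^{2n}-2^{n-1}$; the minimum is visibly the first row, matching the formula $2^n(2^{|A|}+2^{|B|}-2^{|A\cap B|}-1)$. Otherwise $A,B\subsetneq[n]$ with $A\cup B=[n]$, so $|A\setminus B|,|B\setminus A|>0$ and all five weights are present. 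The two smallest candidates are rows three and four, whose difference simplifies to $2^{n-1}(2^{|B|}-2^{|A|})$, so row three wins when $|B|\le|A|$ and row four when $|A|\le|B|$. The remaining comparisons with rows one, five and six each reduce to a nonnegative quantity such as $2^{|A|}-2^{|A\cap B|}-1\ge 0$ (using $|A|>|A\cap B|$ since $A\setminus B\ne\emptyset$) or $2^{n-1}\cdot 2^{|A|}>0$, giving exactly the formulas stated in the theorem.

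The main obstacle is the bookkeeping inside the regime $|A\cup B|=n$ with $|A|,|B|<n$, where five positive-frequency weights coexist and one must verify that row three (or row four, depending on which of $|A|,|B|$ is larger) beats every other candidate. I expect this to reduce uniformly to the strict inequality $|A|>|A\cap B|$ (and its symmetric counterpart), so that every comparison collapses to a difference of two powers of two whose sign is immediate.
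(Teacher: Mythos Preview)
Your proposal is correct and follows exactly the approach the paper intends: the paper simply says ``Similarly to Theorem \ref{thm3}, we have the following result'' and gives no further proof, leaving the reader to deduce length, size, and minimum distance from the weight-preserving property of $\phi$ together with Tables \ref{tabthm2} and \ref{tabcor3}. You have in fact supplied more detail than the paper does, carrying out explicitly the case-by-case comparisons that identify the minimum Lee weight in each regime.
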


 In general, determining whether the Gray image of a quaternary code is linear or not can be a challenging task. However, the following lemma plays a significant role in addressing this matter.

\begin{lem}\label{lem:linear}{\rm \cite[Corollary 3.17] {WQ}
Let $\mathcal{C}$ be a  quaternary linear code, ${\bf x}_1, \cdots, {\bf x}_m$ be a set of generators
of $\mathcal{C}$, and $C = \phi(\mathcal{C})$. Then $C$ is linear if and only if $$2\alpha({\bf x}_i) \ast \alpha({\bf x}_j) \in \mathcal{C}$$ for all
$i, j$ satisfying $1 \le i \le j \le m$, where $\alpha: \Bbb Z_4 \to \mathbb Z_2$ is a group homomorphism and $\alpha(0)=0,\alpha(1)=1, \alpha(2)=0,$ and $ \alpha(3)=1$, and $\ast$ denotes the componentwise product
of two vectors.}

\end{lem}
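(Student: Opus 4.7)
The proof revolves around one algebraic identity that measures the failure of the Gray map $\phi$ to be additive. The plan is to first establish
\[
\phi(\mathbf{u}+\mathbf{v}) \;=\; \phi(\mathbf{u}) + \phi(\mathbf{v}) + \phi\bigl(2\alpha(\mathbf{u}) \ast \alpha(\mathbf{v})\bigr)
\]
for all $\mathbf{u},\mathbf{v}\in \mathbb{Z}_4^n$, where addition on the right is in $\mathbb{Z}_2^{2n}$. I would verify this coordinate-wise by writing $u=a+2b$, $v=c+2d$ with $a,b,c,d\in\mathbb{Z}_2$; the sum in $\mathbb{Z}_4$ is $u+v=(a+c)+2(b+d+ac)$ because the carry bit is precisely $ac$. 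Then $\phi(u+v)=(b+d+ac,\;a+c+b+d+ac)$ while $\phi(u)+\phi(v)=(b+d,\;a+b+c+d)$, and the difference is $(ac,ac)=\phi(2ac)=\phi(2\alpha(u)\alpha(v))$.

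Applying the identity a second time (to the right-hand side, with $\mathbf{v}$ replaced by $2\alpha(\mathbf{u})\ast\alpha(\mathbf{v})$) yields a second-level correction of the form $\phi(2\alpha(\mathbf{u}+\mathbf{v})\ast \alpha(2\alpha(\mathbf{u})\ast\alpha(\mathbf{v})))$. Since $\alpha$ vanishes on $2\mathbb{Z}_2^n$, this correction is $\phi(\mathbf{0})=\mathbf{0}$ and the identity collapses to the clean form
\[
\phi(\mathbf{u}) + \phi(\mathbf{v}) \;=\; \phi\bigl(\mathbf{u}+\mathbf{v} + 2\alpha(\mathbf{u})\ast \alpha(\mathbf{v})\bigr).
\]
Because $\phi$ is a bijection from $\mathbb{Z}_4^n$ to $\mathbb{Z}_2^{2n}$, this shows that $\phi(\mathbf{u})+\phi(\mathbf{v})\in C$ if and only if $\mathbf{u}+\mathbf{v}+2\alpha(\mathbf{u})\ast\alpha(\mathbf{v})\in\mathcal{C}$. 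Using that $\mathcal{C}$ is $\mathbb{Z}_4$-linear (so $\mathbf{u}+\mathbf{v}\in\mathcal{C}$ automatically), linearity of $C=\phi(\mathcal{C})$ is then equivalent to $2\alpha(\mathbf{u})\ast\alpha(\mathbf{v})\in\mathcal{C}$ for every pair $\mathbf{u},\mathbf{v}\in \mathcal{C}$.

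The final step is to reduce the quantifier to generators. Writing $\mathbf{u}=\sum_i a_i\mathbf{x}_i$ and $\mathbf{v}=\sum_j b_j\mathbf{x}_j$ with $a_i,b_j\in\mathbb{Z}_4$, the fact that $\alpha\colon\mathbb{Z}_4\to\mathbb{Z}_2$ is a ring-compatible additive homomorphism (one checks $\alpha(au)=\alpha(a)\alpha(u)$ case by case) gives $\alpha(\mathbf{u})=\sum_i\alpha(a_i)\alpha(\mathbf{x}_i)$ and similarly for $\mathbf{v}$, hence
\[
2\alpha(\mathbf{u})\ast\alpha(\mathbf{v}) \;=\; \sum_{i,j}\alpha(a_i)\alpha(b_j)\,\bigl(2\alpha(\mathbf{x}_i)\ast\alpha(\mathbf{x}_j)\bigr),
\]
the symmetry $\alpha(\mathbf{x}_i)\ast\alpha(\mathbf{x}_j)=\alpha(\mathbf{x}_j)\ast\alpha(\mathbf{x}_i)$ letting me restrict to $i\le j$. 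Each summand is a $\mathbb{Z}_4$-multiple of a generator-level product, so if all of the $2\alpha(\mathbf{x}_i)\ast\alpha(\mathbf{x}_j)$ lie in $\mathcal{C}$, so does every $2\alpha(\mathbf{u})\ast\alpha(\mathbf{v})$; conversely the generator condition is a special case of the general one.

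The main obstacle I anticipate is the bookkeeping for the twisted-additivity identity, specifically tracking the $\mathbb{Z}_4$ carry bit and then verifying that the second-level correction cancels because $\alpha$ kills $2\mathbb{Z}_2^n$. Once that identity is in hand, everything else is standard linear-combination manipulation.
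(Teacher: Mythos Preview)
Your argument is correct. The key identity $\phi(\mathbf{u}+\mathbf{v})=\phi(\mathbf{u})+\phi(\mathbf{v})+\phi(2\alpha(\mathbf{u})\ast\alpha(\mathbf{v}))$ is verified coordinate-wise exactly as you describe, the passage to $\phi(\mathbf{u})+\phi(\mathbf{v})=\phi(\mathbf{u}+\mathbf{v}+2\alpha(\mathbf{u})\ast\alpha(\mathbf{v}))$ via the vanishing second-level correction is clean, and the reduction to generators goes through because $\alpha:\mathbb{Z}_4\to\mathbb{Z}_2$ is the mod-$2$ ring homomorphism (so both additivity and $\alpha(au)=\alpha(a)\alpha(u)$ hold).

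There is, however, nothing in the paper to compare against: the lemma is quoted verbatim from Wan's book \cite[Corollary~3.17]{WQ} and the authors give no proof of their own. Your write-up is the standard proof one finds for this result (and is essentially the argument in Wan), so there is no genuine methodological difference to discuss.
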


Now we are ready  to determine when the Gray image of a quaternary code above is linear or not. 

\begin{thm}\label{thm: linear}
{\rm Assume that $\mathcal{C}_{D}$ with $D=\Delta_A+2 \Delta^*$ is as in Theorem \ref{thm1}. Then $\phi(\mathcal{C}_{D})$ is linear if and only if $|A|=1$.
}
\end{thm}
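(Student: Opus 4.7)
The approach is to apply Lemma~\ref{lem:linear} to the natural $\mathbb Z_4$-generating set of $\mathcal C_D$ and to reduce the Gray-linearity criterion to a small binary system indexed by $\Delta_A$ that splits cleanly into two cases according to $|A|$.

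First, I would use the generators $c_D(e_1),\dots,c_D(e_n)$ coming from the standard basis of $\mathbb Z_4^n$, which span $\mathcal C_D$ because the map $u\mapsto c_D(u)$ is $\mathbb Z_4$-linear. For $t=t_1+2t_2\in D$ the $t$-coordinate of $c_D(e_i)$ is $(t_1)_i+2(t_2)_i$, so $\alpha(c_D(e_i))$ has $t$-coordinate $(t_1)_i$, and the test vector $y_{ij}:=2\,\alpha(c_D(e_i))\ast\alpha(c_D(e_j))$ has $t$-coordinate $2(t_1)_i(t_1)_j\in\{0,2\}\subset\mathbb Z_4$. By Lemma~\ref{lem:linear}, $\phi(\mathcal C_D)$ is linear iff $y_{ij}\in\mathcal C_D$ for all $1\le i\le j\le n$.

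Next, writing $u=u_1+2u_2\in\mathbb Z_4^n$, the equation $c_D(u)=y_{ij}$ expands coordinate-wise into
\[
u_1\cdot t_1+2\bigl(u_1\cdot t_2+u_2\cdot t_1\bigr)\equiv 2(t_1)_i(t_1)_j\pmod 4
\]
for all $t_1\in\Delta_A$ and $t_2\in\Delta^*$, where $u_1\cdot t_1$ denotes the \emph{integer} inner product. Since the right-hand side always lies in $\{0,2\}$, taking $t_1=\{k\}$ with $k\in A$ forces $(u_1)_k=0$; taking $t_1=0$ together with $t_2=\{k\}$ for $k\in A\cup B$ forces $(u_1)_k=0$ on the remainder of $A\cup B$. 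Hence $\mathrm{supp}(u_1)\cap(A\cup B)=\emptyset$, so $u_1\cdot t_1$ and $u_1\cdot t_2$ both vanish (as $t_1\subseteq A$ and $t_2\subseteq A\cup B$), and the whole equation collapses to the question of whether there exists $u_2\in\mathbb Z_2^n$ with
\[
u_2\cdot t_1\equiv (t_1)_i(t_1)_j\pmod 2\quad\text{for every }t_1\in\Delta_A.
\]

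Finally, a brief case analysis on $|A|$ closes the argument. If $|A|=1$, say $A=\{a\}$, then $\Delta_A=\{\emptyset,\{a\}\}$ and the reduced condition is the single scalar equation $(u_2)_a\equiv \delta_{ia}\delta_{ja}\pmod 2$, trivially solvable for every pair $(i,j)$. If instead $|A|\ge 2$, I would choose distinct $i,j\in A$; specializing the condition at $t_1=\{i\},\{j\},\{i,j\}\in\Delta_A$ successively forces $(u_2)_i\equiv 0$, $(u_2)_j\equiv 0$, and yet $(u_2)_i+(u_2)_j\equiv 1\pmod 2$, which is inconsistent. Hence $\phi(\mathcal C_D)$ is linear iff $|A|=1$. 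The step I expect to be the most delicate is the derivation of $\mathrm{supp}(u_1)\cap(A\cup B)=\emptyset$, because one must keep track of $u_1\cdot t_1$ as a nonnegative integer whose parity is constrained by the $\{0,2\}$-valued target, and combine the constraints coming from singleton $t_1$ with those coming from $t_1=0$ and singleton $t_2$ to pin $u_1$ down on all of $A\cup B$.
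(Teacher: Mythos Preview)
Your proposal is correct and, for the harder direction ($|A|\ge 2\Rightarrow$ nonlinear), takes a genuinely different route from the paper. Both arguments start from Lemma~\ref{lem:linear} applied to the rows $c_D(e_i)$ of the generator matrix, and both observe that the $t$-coordinate of $\alpha(c_D(e_i))$ is $(t_1)_i$. From there they diverge. You \emph{solve} the membership equation $c_D(u)=y_{ij}$ directly: reducing modulo $2$ pins down $u_1$ on $A$ and then on $A\cup B$, and the residual $\mathbb F_2$-system $u_2\cdot t_1\equiv (t_1)_i(t_1)_j$ over $t_1\in\Delta_A$ is visibly solvable when $|A|=1$ and visibly inconsistent (via $t_1=\{i\},\{j\},\{i,j\}$) when $|A|\ge 2$. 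The paper instead \emph{counts}: it computes the Lee weight of $2\alpha(\mathbf{x}_l)\ast\alpha(\mathbf{x}_k)$ to be $2^{|A|-1}(2^{|A|}+2^{|B|}-2^{|A\cap B|}-1)$ and checks against the Lee weight table of Theorem~\ref{thm1} that this value cannot arise from a codeword, which forces an ad~hoc treatment of the exceptional parameters $A\cap B=\emptyset$, $|A|=2$, $|B|=1$ via an explicit generator matrix. Your approach is self-contained, does not rely on the Lee weight distribution, and has no corner case; the paper's approach, on the other hand, illustrates a technique (ruling out a vector by its weight) that transfers to settings where the coordinate description of $D$ is less transparent.
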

\begin{proof} Let $D=\{ {\bf g}_1, {\bf g}_2,\ldots, {\bf g}_{|D|} \}\subseteq \mathbb Z_4^n$. Then the code $\mathcal{C}_{D}$ of length $|D|$ over $\mathbb Z_{4}$ can be defined by\begin{equation*}\mathcal{C}_{D}= \{c_{\bf u}=({\bf u\cdot g}_1, {\bf u\cdot  g}_2, \ldots, {\bf u\cdot { g}}_{|D|}): {\bf  u}\in  \mathbb Z_4^n\}, \end{equation*}  where $\cdot $ denotes the Euclidean inner product of two elements in $\mathbb Z_4^n.$  Let $G$ be the $n\times |D|$ matrix as follows:
\begin{equation*}G=[{\bf g}_1^T \; {\bf g}_2^T \; \cdots \; {\bf g}_{|D|}^T]=\left( \begin{array}{ccccccc}
{\bf x}_1 \\
\vdots\\
{\bf x}_n
\end{array} \right),\end{equation*}
where the column vector ${\bf g}_i^T$ denotes the transpose of a row vector ${\bf g}_i$. Hence $ {\bf x}_1, \ldots, {\bf x}_n $ can be viewed as a set of generators of $\mathcal{C}_{D}$.

By Lemma \ref{lem:linear},  $\phi(\mathcal{C}_{D})$ is linear if and only if $$2\alpha({\bf x}_i) \ast \alpha({\bf x}_j) \in \mathcal{C}_{D}$$ for all
$i, j$ satisfying $1 \le i \le j \le n$, where $\alpha: \Bbb Z_4 \to \mathbb Z_2$ is a group homomorphism and $\alpha(0)=0,\alpha(1)=1, \alpha(2)=0,$ and $ \alpha(3)=1$, and $\ast$ denotes the componentwise product
of two vectors.

($\Longleftarrow$). Suppose that $|A|=1$. By Theorem \ref{thm1}, the code $\mathcal{C}_{D}$ has size $4^12^{|B\backslash A|}$, which is permutation-equivalent to a code with generator matrix $G$ of the form
\begin{eqnarray} \left( \begin{array}{ccccccc}
I_{1} &A_1&  B_1+2B_2 \\
0& 2I_{k_2}&2A_2
\end{array} \right),
\end{eqnarray}
where $A_1 , A_2 , B_1$, and $B_2$ are matrices with entries $0$ or $1$. Namely, in the set $\{ {\bf x}_1, {\bf x}_2,\ldots, {\bf x}_n \}$ there is only one vector ${\bf x}_i$ such that $\alpha({\bf x}_i)\neq {\bf 0}$. Hence for any $1\le i\neq j\le n$, we always have $2\alpha({\bf x}_i) \ast \alpha({\bf x}_j)={\bf 0} \in \mathcal{C}_{D}$. It is sufficient to check that $2\alpha({\bf x}_i) \ast \alpha({\bf x}_i) \in \mathcal{C}_{D}$.  By the definition and the fact that $\mathcal{C}_{D}$ is a linear quaternary code, we have that $2\alpha({\bf x}_i) \ast \alpha({\bf x}_i)=2{\bf x}_i \in \mathcal{C}_{D}$ and hence its Gray image $\phi(\mathcal{C}_{D})$ is linear.




($\Longrightarrow$).  Assume that $\phi(\mathcal{C}_{D})$ is linear and $|A| \ge 2$. So we can select two distinct integers $l,k\in A\subseteq [n]$. Notice that ${\bf g}_i\in D =\Delta_A+2\Delta^*$ and $\Delta_A\subset \Delta^*$. Let ${\bf g}_i=({\bf g}_i^{(1)}, \ldots, {\bf g}_i^{(n)})$.
Therefore the size of the set $$\{{\bf g}_i: 1\le i\le |D|, {\bf g}_i^{(l)} \equiv {\bf g}_i^{(k)} \equiv 1\pmod 2\}$$ is equal to $(2^{|A|}+2^{|B|}-2^{|A \cap B|}-1)2^{|A|-2}$. Then we have that the vector $$2\alpha({\bf x}_l) \ast \alpha({\bf x}_k) \in \mathcal{C}_{D}$$ 
and has Lee weight $(2^{|A|}+2^{|B|}-2^{|A \cap B|}-1)2^{|A|-1}$ due to the Lee weight of 2 is 2.

In Theorem \ref{thm1},  we determined the Lee weight distribution of $\mathcal{C}_{D}$. By checking Table 3.1, the Lee weight $(2^{|A|}+2^{|B|}-2^{|A \cap B|}-1)2^{|A|-1}$  cannot be the Lee weight of a certain codeword of $\mathcal{C}_{D}$ except for one situation that $A\cap B=\emptyset$, $|A|=2$, and $|B|=1$.  Without loss of generality, assume that $n=3$, $A=\{1,2\}$, and $B=\{3\}$. The generator matrix of the code $\mathcal{C}_{D}$ is as follows
$$
\begin{bmatrix}
2 & 0 & 2 & 0  & 3 & 1 & 3 & 1  & 2 & 0 & 2 & 0 & 3 & 1 & 3 & 1  \\
0 & 2 & 2 & 0  & 0 & 2 & 2 & 0  & 1 & 3 & 3 & 2  & 1 & 3 & 3 & 1  \\
0 & 0 & 0 & 2  & 0 & 0 & 0 & 2  & 0 & 0 & 0 & 2  & 0 & 0 & 0 & 2  \\ 
\end{bmatrix}.$$ If $$2\alpha({\bf x}_1) \ast \alpha({\bf x}_2)=(0, 0,0,0, 0, 0,0,0,0, 0,0,0,2,2,2,2) \in \mathcal{C}_{D},$$ then $${\bf x}_3+2\alpha({\bf x}_1) \ast \alpha({\bf x}_2)=(0, 0,0,2, 0, 0,0,2,0, 0,0,2,2,2,2,0)\in \mathcal{C}_{D}$$ has Lee weight 12, however the Lee weight distribution  the code is $1+z^8+24z^{16}+6z^{20}$ by Corollary 3.3.  There is  a contradiction and in this case we should have that $|A|=1$.


This completes the proof.
\end{proof}

Similarly to Theorem 4.4, we have the following result.

\begin{thm}{\rm  Assume that $\mathcal{C}_{D}$ with $D=\Bbb Z_2^n+2 \Delta^*$ is as in Theorem \ref{thm2}. If $|A|>1 \mbox{ or }  |B|>1$, then $\phi(\mathcal{C}_{D})$ is not linear.

}

\end{thm}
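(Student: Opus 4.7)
The plan is to mirror the strategy of Theorem \ref{thm: linear}: apply Lemma \ref{lem:linear} to the generator matrix $G=[{\bf g}_1^T\ {\bf g}_2^T\ \cdots\ {\bf g}_{|D|}^T]$ with rows ${\bf x}_1,\ldots,{\bf x}_n$, and exhibit a single pair $(l,k)$ with $2\alpha({\bf x}_l)\ast\alpha({\bf x}_k)\notin\mathcal{C}_D$. The hypothesis $|A|>1$ or $|B|>1$ (together with $A,B\subseteq[n]$) forces $n\ge 2$, so I pick distinct indices $l,k\in[n]$. Writing every ${\bf g}\in D$ uniquely as ${\bf g}={\bf a}+2{\bf b}$ with ${\bf a}\in\mathbb{Z}_2^n$ and ${\bf b}\in\Delta^*$, the entry of $\alpha({\bf x}_l)$ at position ${\bf g}$ is $a_l$, so $\alpha({\bf x}_l)\ast\alpha({\bf x}_k)$ is supported on those ${\bf g}$ with $a_l=a_k=1$. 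Since ${\bf a}$ runs over all of $\mathbb{Z}_2^n$ and ${\bf b}$ independently over $\Delta^*$, there are $2^{n-2}|\Delta^*|$ such positions, so
\[
w_L\bigl(2\alpha({\bf x}_l)\ast\alpha({\bf x}_k)\bigr)=2^{n-1}\bigl(2^{|A|}+2^{|B|}-2^{|A\cap B|}-1\bigr).
\]

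The decisive step is to classify every codeword of $\mathcal{C}_D$ whose entries all lie in $\{0,2\}$, since $2\alpha({\bf x}_l)\ast\alpha({\bf x}_k)$ is manifestly of that form. Writing ${\bf u}=\boldsymbol{\alpha}+2\boldsymbol{\beta}$, the reduction of ${\bf u}\cdot{\bf t}$ modulo $2$ is $\boldsymbol{\alpha}\cdot{\bf a}$, so the requirement ${\bf u}\cdot{\bf t}\in\{0,2\}$ for every ${\bf t}\in D$ translates into $\boldsymbol{\alpha}\cdot{\bf a}\equiv 0\pmod 2$ for all ${\bf a}\in\mathbb{Z}_2^n$. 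This is exactly where the hypothesis $D_1=\mathbb{Z}_2^n$ is crucial: non-degeneracy of the inner product on $\mathbb{Z}_2^n$ forces $\boldsymbol{\alpha}={\bf 0}$, hence ${\bf u}=2\boldsymbol{\beta}$; a direct count then shows that $c_D(2\boldsymbol{\beta})$ has Lee weight $0$ if $\boldsymbol{\beta}={\bf 0}$ and $2^n|\Delta^*|$ otherwise. Since $|\Delta^*|\ge 1$, the value $2^{n-1}|\Delta^*|$ lies strictly between $0$ and $2^n|\Delta^*|$, so $2\alpha({\bf x}_l)\ast\alpha({\bf x}_k)\notin\mathcal{C}_D$ and Lemma \ref{lem:linear} delivers the nonlinearity of $\phi(\mathcal{C}_D)$.

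The main obstacle I expect is precisely this classification of $\{0,2\}$-valued codewords. A naive weight-comparison against Table \ref{tabthm2} is not enough, because the weight $2^{n-1}(2^{|A|}+2^{|B|}-2^{|A\cap B|}-1)$ already appears in the spectrum (row three, with positive frequency whenever $A\cup B\subsetneq[n]$), so one cannot rule out $2\alpha({\bf x}_l)\ast\alpha({\bf x}_k)$ by its weight alone as in Theorem \ref{thm: linear}. The extra structural input needed is that any $\{0,2\}$-valued codeword must come from ${\bf u}\in 2\mathbb{Z}_2^n$; once this restriction is established, the admissible Lee weights collapse to $\{0,2^n|\Delta^*|\}$, the contradiction is clean, and no small-case analysis (analogous to $A\cap B=\emptyset$, $|A|=2$, $|B|=1$ in Theorem \ref{thm: linear}) is required.
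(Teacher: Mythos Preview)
Your proof is correct, and it takes a genuinely different route from the paper's argument. The paper does not re-apply Lemma~\ref{lem:linear} at all; instead it decomposes $D=D_1\cup D_2$ with $D_1=\Delta_E+2\Delta^*$ for some $E\in\{A,B\}$ with $|E|>1$, invokes Theorem~\ref{thm: linear} to conclude that $\phi(\mathcal{C}_{D_1})$ is nonlinear, and then observes that $\mathcal{C}_D$ is an extension of $\mathcal{C}_{D_1}$ (equivalently, $\mathcal{C}_{D_1}$ is a coordinate projection of $\mathcal{C}_D$), so nonlinearity of $\phi(\mathcal{C}_{D_1})$ forces nonlinearity of $\phi(\mathcal{C}_D)$ since the Gray map acts coordinatewise.

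Your direct argument trades this reduction for the structural observation that, because $D_1=\mathbb{Z}_2^n$, every $\{0,2\}$-valued codeword must come from $\mathbf{u}\in 2\mathbb{Z}_2^n$, collapsing the admissible Lee weights to $\{0,\,2^n|\Delta^*|\}$. This is cleaner in two respects: it avoids importing the somewhat delicate case analysis inside Theorem~\ref{thm: linear} (including the exceptional case $A\cap B=\emptyset$, $|A|=2$, $|B|=1$), and it actually shows that the hypothesis ``$|A|>1$ or $|B|>1$'' is not really needed---your argument only uses $n\ge 2$, which is already a standing assumption in Theorem~\ref{thm2}. The paper's approach, on the other hand, is very short once Theorem~\ref{thm: linear} is in hand and highlights the puncturing relationship between the two code families.
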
 

\begin{proof}  
By the condition,  then there exists a maximal element $E\subseteq [n]$ with $|E|>1$ such that $\Delta_E\subseteq  \Delta^*$ and $$D=\Bbb Z_2^n+2 \Delta^*=(\Delta_E+2 \Delta^*)\cup(\Bbb Z_2^n\backslash \Delta_E +2 \Delta^*)=D_1\cup D_2.$$
By Theorem \ref{thm: linear},  $\phi(\mathcal C_{D_1})$ is not linear. The code $\mathcal{C}_{D}$ can be viewed as an extended code of $\mathcal C_{D_1}$ with $|D_2|$ digits.
Hence $\phi(\mathcal{C}_{D})$ is not linear and we are done.\end{proof} 






\begin{prop}
\label{prop2}{\rm

In Theorem \ref{thm3},  if $1=|A| <|B|$, then the Gray image $\phi (\mathcal{C}_{D})$ is a binary minimal  linear code.

}
\end{prop}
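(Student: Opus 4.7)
The strategy is to first leverage Theorem \ref{thm: linear} to reduce to a question about a \emph{linear} binary code, and then verify the Ashikhmin--Barg sufficient condition (Lemma \ref{lem2.2}) using the already-known Lee weight distribution.

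Since $|A|=1$, Theorem \ref{thm: linear} immediately gives that $\phi(\mathcal C_D)$ is a binary linear code. Because $\phi$ is a weight-preserving map, the Hamming weight distribution of $\phi(\mathcal C_D)$ coincides with the Lee weight distribution of $\mathcal C_D$ recorded in Table \ref{tabZ4WD1}. Thus it suffices to determine $w_{\min}$ and $w_{\max}$ from Table \ref{tabZ4WD1} after setting $|A|=1$, and to check that $w_{\min}/w_{\max}>1/2$, whereupon Lemma \ref{lem2.2} (with $q=2$) yields that $\phi(\mathcal C_D)$ is minimal.

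Since $A=\{a\}$ is a singleton, exactly one of the following holds: either $A\cap B=\emptyset$ or $A\subseteq B$ (so $|A\cap B|=1$). I will treat these two subcases separately. In each subcase I plan to substitute $|A|=1$ and the appropriate value of $|A\cap B|$ into the five weights in Table \ref{tabZ4WD1} and simultaneously into their frequencies, so as to discard the weights whose multiplicity vanishes. Concretely, when $A\cap B=\emptyset$, the fifth row of Table \ref{tabZ4WD1} has frequency $0$, and the surviving weights reduce to $\{2+2^{|B|},\,2^{|B|+1},\,2+2^{|B|+1}\}$, giving $w_{\min}=2+2^{|B|}$ and $w_{\max}=2+2^{|B|+1}$. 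When $A\subseteq B$, the third and fourth rows have frequency $0$, and the surviving weights reduce to $\{2^{|B|+1}-2,\,2^{|B|+1}-1,\,2^{|B|+1}\}$, giving $w_{\min}=2^{|B|+1}-2$ and $w_{\max}=2^{|B|+1}$.

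The final step is a short arithmetic check using $|B|\geq 2$: in the first subcase, $w_{\min}/w_{\max}=(2+2^{|B|})/(2+2^{|B|+1})>1/2$ is equivalent to $4+2^{|B|+1}>2+2^{|B|+1}$, which is trivially true; in the second subcase, $w_{\min}/w_{\max}=1-2^{-|B|}\geq 3/4>1/2$. Hence Lemma \ref{lem2.2} applies in both subcases, proving the minimality of $\phi(\mathcal C_D)$.

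I do not anticipate a genuine obstacle here; the whole argument is a bookkeeping exercise around Table \ref{tabZ4WD1}. The one point requiring care is to notice that the frequency columns cause different weights to drop out in the two subcases, so that $w_{\max}$ is \emph{not} given by the same row of the table in both situations. Writing out the frequencies explicitly when substituting $|A|=1$ with $|A\cap B|\in\{0,1\}$ is the only way to avoid being misled by the weights that appear symbolically in Table \ref{tabZ4WD1} but carry zero multiplicity.
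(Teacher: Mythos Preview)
Your proposal is correct and follows the same overall strategy as the paper: linearity from Theorem \ref{thm: linear}, then the Ashikhmin--Barg condition (Lemma \ref{lem2.2}) applied to the Lee weight distribution in Table \ref{tabZ4WD1}. The only difference is that the paper writes a single unified ratio
\[
\frac{w_{\min}}{w_{\max}}=\frac{2^{|A|+1}-2^{|A\cap B|}-1+2^{|B|}}{2^{|A|+1}-2^{|A\cap B|}-1+2^{|B|+1}}>\frac12
\]
by taking rows 4 and 5 of Table \ref{tabZ4WD1} as the extremal weights, without splitting on whether $A\cap B=\emptyset$ or $A\subseteq B$; your case split is more careful, since when $A\subseteq B$ those two rows actually carry zero frequency (the paper's displayed ratio then functions only as a lower bound for the true $w_{\min}/w_{\max}$, which still suffices). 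Your version makes this bookkeeping explicit and is arguably cleaner.
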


\begin{proof}
 By Lemma \ref{lem2.2} and Theorem \ref{thm: linear}, it is {sufficient} to check 
\begin{eqnarray*} 
\frac{wt_{min}}{wt_{max}}
&=& \frac{2^{2|A|}+2^{|A|-1}(2^{|B|}-2^{|A \cap B|}-1)}{2^{|A|}(2^{|A|}+2^{|B|})-2^{|A|-1}(2^{|A \cap B|}+1)}\\
&=& \frac{2^{|A|+1}-2^{|A \cap B|}-1+2^{|B|}}{2^{|A|+1}-2^{|A \cap B|}-1+2^{|B|+1}}\\
&=& \frac{\lambda+2^{|B|}}{\lambda+2^{|B|+1}} > \frac{1}{2},
\end{eqnarray*} 
where $\lambda = 2^{|A|+1}-2^{|A \cap B|}-1$.

This completes the proof.
\end{proof}








Here are some examples of  binary codes and secret sharing schemes  from the Gray map, which are all confirmed  by Magma

\begin{exa}{\rm
  
  (1) In Example \ref{exa3.2}, the generator matrix of the code $\mathcal{C}_{D}$ is as follows
$$
\begin{bmatrix}
0 & 0 & 0 & 2 & 1 & 1 & 1 & 3 \\
2 & 0 & 2 & 0 & 2 & 0 & 2 & 0 \\
0 & 2 & 2 & 0 & 0 & 2 & 2 & 0 
\end{bmatrix} 
.$$ Then $\phi (\mathcal{C}_{D})$  is a binary minimal  linear code with parameters $[16, 4, 6]$.

(2) In Example \ref{exa3.1},  The generator matrix of the code $\mathcal{C}_{D}$ is as follows
$$
\begin{bmatrix}
2 & 0 & 2 & 0 & 0 & 3 & 1 & 3 & 1 & 1 & 2 & 0 & 2 & 0 & 0 & 3 & 1 & 3 & 1 & 1 \\
0 & 2 & 2 & 0 & 2 & 0 & 2 & 2 & 0 & 2 & 1 & 3 & 3 & 2 & 3 & 1 & 3 & 3 & 1 & 3 \\
0 & 0 & 0 & 2 & 2 & 0 & 0 & 0 & 2 & 2 & 0 & 0 & 0 & 2 & 2 & 0 & 0 & 0 & 2 & 2 \\ 
\end{bmatrix} =\begin{bmatrix}
{\bf x}_1  \\
{\bf x}_2  \\
{\bf x}_3   
\end{bmatrix} 
.$$ Then $w_L(2\alpha({\bf x}_1) \ast \alpha({\bf x}_2))=10$ and $2\alpha({\bf x}_1) \ast \alpha({\bf x}_2) \not\in \mathcal{C}_{D}$. Hence  $\phi (\mathcal{C}_{D})$  is a  binary nonlinear code with parameters $(40, 2^5, 16)$.
   


}   
\end{exa}


\begin{exa}{\rm

Let $n = 4$, $A = \{1\}$, $B = \{2\}$, $\Delta$ be generated by $A$ and $B$, and $D=\Delta_A+2\Delta^*$. 
   Then $\phi (\mathcal{C}_{D})$ is a minimal binary code with parameters $[8, 3, 4]$. Furthermore, see \cite{G2} it is optimal and its the dual code  $(\phi (\mathcal{C}_{D}))^{\bot}$ has parameters $[8, 5, 2]$ and generator matrix
$$
\begin{bmatrix}
1 & 0 & 0 & 0 & 0 & 0 & 1 & 1\\
0 & 1 & 0 & 0 & 0 & 0 & 1 & 1\\
0 & 0 & 1 & 0 & 0 & 1 & 1 & 0\\
0 & 0 & 0 & 1 & 0 & 1 & 1 & 0\\
0 & 0 & 0 & 0 & 1 & 1 & 1 & 1 
\end{bmatrix} 
.$$
   

In this secret sharing scheme, $7$ participants and a dealer are involved, there are altogether $12$ minimal access sets, 
$\{1,2,3\}$, $\{3,5,7\}$, $\{3,4,6\}$, $\{2,5,7\}$, $\{2,4,6\}$, $\{2,3,6,7\}$, $\{2,3,4,5\}$, $\{1,4,5,6,7\}$, $\{1,3,5,6\}$, $\{1,3,4,7\}$, $\{1,2,5,6\}$, $\{1,2,4,7\}$.  There is no dictatorial participant in this secret sharing scheme, as no participant is involved in every minimal access set. In other words, each participant is involved in the scheme.

}
\end{exa}

\begin{exa}{\rm

Let $n = 4$, $A = \{1\}$, $B = \{3,4\}$, $\Delta$ be generated by $A$ and $B$, and $D=\Delta_A+2\Delta^*$. 
   Then $\phi (\mathcal{C}_{D})$ is a minimal binary code  with parameters $[16, 4, 6]$, and its dual code $(\phi (\mathcal{C}_{D}))^{\bot}$ with parameters $[16, 12, 2]$ has generate matrix    
$$
\begin{bmatrix}
1 & 0 & 0 & 0 & 0 & 0 & 0 & 0 & 0 & 0 & 0 & 0 & 0 & 0 & 1 & 1 \\
0 & 1 & 0 & 0 & 0 & 0 & 0 & 0 & 0 & 0 & 0 & 0 & 0 & 0 & 1 & 1 \\
0 & 0 & 1 & 0 & 0 & 0 & 0 & 0 & 0 & 0 & 0 & 0 & 0 & 1 & 0 & 1 \\
0 & 0 & 0 & 1 & 0 & 0 & 0 & 0 & 0 & 0 & 0 & 0 & 0 & 1 & 0 & 1 \\
0 & 0 & 0 & 0 & 1 & 0 & 0 & 0 & 0 & 0 & 0 & 1 & 0 & 0 & 0 & 1 \\
0 & 0 & 0 & 0 & 0 & 1 & 0 & 0 & 0 & 0 & 0 & 1 & 0 & 0 & 0 & 1 \\
0 & 0 & 0 & 0 & 0 & 0 & 1 & 0 & 0 & 0 & 0 & 1 & 0 & 1 & 0 & 0 \\
0 & 0 & 0 & 0 & 0 & 0 & 0 & 1 & 0 & 0 & 0 & 1 & 0 & 1 & 0 & 0 \\
0 & 0 & 0 & 0 & 0 & 0 & 0 & 0 & 1 & 0 & 0 & 1 & 0 & 1 & 0 & 1 \\
0 & 0 & 0 & 0 & 0 & 0 & 0 & 0 & 0 & 1 & 0 & 1 & 0 & 1 & 1 & 0 \\
0 & 0 & 0 & 0 & 0 & 0 & 0 & 0 & 0 & 0 & 1 & 1 & 0 & 0 & 1 & 1 \\
0 & 0 & 0 & 0 & 0 & 0 & 0 & 0 & 0 & 0 & 0 & 0 & 1 & 1 & 1 & 1 
\end{bmatrix} 
.$$ 
}
\end{exa}

In this secret sharing scheme, there are $15$ participants and a dealer involved. All the codewords in the dual code $(\phi (\mathcal{C}_{D}))^{\bot}$ are calculated by the program. There are a total of $229$ minimal access sets, each requiring at least $10$ members to recover the secret. With the assistance of a program, we can obtain all of the access sets. Additionally, there is no dictatorial participant in this secret sharing scheme.

\section{Code Comparisons and Concluding Remarks}

To show signiﬁcant advantages of our quaternary codes, in Table 6 we {list} some quaternary codes of two maximal elements $A, B$ when $n=3$ and $n=4$. Here 
$*$ indicates that the corresponding codes are optimal, and “new” are also indicated according to the current $\Bbb Z_4$ database \cite{AAD,Z4}, and $d_L^{best}$ means the best known minimum Lee  distance of quaternary  linear codes with the same length and size of $\mathcal C_{D}$, and $d_H^{best}$ means the best known minimum Hamming  distance of binary linear codes with the same length and size {\color[rgb]{0,0,1} of   $\phi(\mathcal C_{D})$, according to codetables \cite{G2}}. It is believed that our linear quaternary codes include more new codes although we cannot compare with the parameters in $\Bbb Z_4$ database \cite{AAD, Z4} because our code length can be large. 

The main contributions  in this paper are the following

\begin{itemize}
  \item The determinations of the Lee weight distributions of the codes over $\mathbb{Z}_4$ when simplicial complexes  generated by two maximal elements in (1.1), see Theorems \ref{thm1} and \ref{thm2}.
  
  \item  Two infinite families of four-Lee-weight quaternary codes in Corollaries 3.3 and 3.10.

  \item At least nine new quaternary codes comparing the online codetable by \cite{AAD,Z4}, see Table 6.
  \item Two infinite families of binary nonlinear  codes in Theorems  \ref{thm: linear} and  4.5.
  

  \item One class of binary minimal  codes from the Gray images of the quaternary codes in Proposition \ref{prop2}.

  \item Using those  binary  minimal codes, we obtained to build secret sharing schemes in Examples 4.8 and 4.9.
\end{itemize}

It is interesting to find more new and optimal quaternary codes by using simplicial complexes and even posets.



\section*{Acknowledgments}

{\small This work was supported by the National Natural Science Foundation of China (Grant Nos. 62372247, 12101326, 61932013), the Natural Science Foundation of Jiangsu Province (Grant No. BK20210575), and the China Postdoctoral Science Foundation (Grant No. 2023M740958). The authors wish to thank the anonymous reviewers and the Associate Editor for their very helpful comments that improved the presentation and quality of this article.}

\begin{table}
  \caption{ Quaternary linear codes  from Theorems \ref{thm1} and \ref{thm2}  }   
  \begin{tabu} to 0.8\textwidth{|c|c|c|c|c|c|c|c|c|}  
  \hline 
  \rm{$n$}&\rm{$|A|$}&\rm{$|B|$}&\rm{$|A\cap B|$}&\rm{Length}&\rm{Size}&\rm{$d_L$}&\rm{ Comment }&\rm{ $\phi(\mathcal C_{D})$ }\\ 
  \hline
  3 & 1 & 1 & 0 &  4 & $4^12^1$ & 4  & new & minimal linear, $d_H^{best}=4$ \\
  \hline
  3 & 1 & 1 & 0 & 16 & $4^3$ & 8  &$d_L^{best}=16$  &  $d_H^{best}=16$  \\
  \hline
  3 & 1 & 2 & 0 & 8 & $4^12^2$ & 6 & $d_L^{best}=8$ & minimal linear, $d_H^{best}=8$ \\
  \hline
  3 & 1 & 2 & 1 & 6 & $4^12^1$ & 5 & $d_L^{best}=6$ & minimal linear, $d_H^{best}=6$ \\
  \hline
    3 & 1 & 2 & 0 & 32 & $4^3$ & 24 &$d_L^{best}=32$& nonlinear, $d_H^{best}=32$ \\
  \hline
  3 & 1 & 3 & 1 & 14 & $4^12^2$ & 9 & $d_L^{best}=14$ & minimal linear, $d_H^{best}=14$\\
  \hline
  3 & 2 & 1 & 0 & 16 & $4^22^1$ & 8 & best known * &nonlinear, $d_H^{best}=16$  \\
  \hline
  3 & 2 & 1 & 1 & 12 & $4^2$ & 6 & $d_L^{best}=12$ &nonlinear, $d_H^{best}=12$  \\
  \hline
  3 & 2 & 3 & 2 & 28 & $4^22^1$ & 22 & new & nonlinear, $d_H^{best}=28$ \\
  \hline
  3 & 2 & 2 & 1 & 20 & $4^22^1$ & 16 & new &nonlinear, $d_H^{best}=20$   \\
  \hline
  3 & 2 & 2 & 1 & 40 & $4^3$ & 36 &$d_L^{best}=40$& nonlinear, $d_H^{best}=40$ \\
  \hline
  3 & 3 & 1 & 1 & 56 & $4^3$ & 16 & $d_L^{best}=56$ & nonlinear, $d_H^{best}=56$ \\
  \hline
  3 & 3 & 2 & 2 & 56 & $4^3$ & 32 & $d_L^{best}=56$ & nonlinear, $d_H^{best}=56$ \\
  \hline

  4 & 1 & 1 & 0 & 4  & $4^12^1$ & 4 & new & minimal  linear, $d_H^{best}=4$ \\
  \hline
  4 & 1 & 3 & 0 & 16 & $4^12^3$ & 10 & $d_L^{best}=16$ & minimal linear, $d_H^{best}=16$ \\
  \hline
  4 & 1 & 3 & 1 & 14 & $4^12^2$ & 9 & $d_L^{best}=14$ & minimal linear, $d_H^{best}=14$\\
  \hline
  4 & 2 & 2 & 0 & 24 & $4^22^2$ & 16 & $d_L^{best}=24$& nonlinear, $d_H^{best}=24$ \\
  \hline
  4 & 2 & 2 & 1 & 20 & $4^22^1$ & 16 & new &nonlinear, $d_H^{best}=20$  \\
  \hline
  4 & 2 & 3 & 1 & 36 & $4^22^2$ & 26 & new &nonlinear, $d_H^{best}=35$  \\
  \hline
  4 & 2 & 3 & 2 & 28 & $4^22^1$ & 22 & new &nonlinear, $d_H^{best}=28$\\
  \hline
  4 & 2 & 4 & 2 & 60 & $4^22^2$ & 38 & new &nonlinear, $d_H^{best}=60$  \\
  \hline
  4 & 3 & 1 & 0 & 64 & $4^32^1$ & 16 & new &nonlinear, $d_H^{best}=64$  \\
  \hline
  4 & 3 & 1 & 1 & 56 & $4^3$ & 16 & $d_L^{best}=56$ &nonlinear, $d_H^{best}=56$ \\
  \hline
  4 & 3 & 2 & 1 & 72 & $4^32^1$ & 32 & $d_L^{best}=70$ &nonlinear, $d_H^{best}=71$  \\
  \hline
  4 & 3 & 2 & 2 & 56 & $4^3$ & 32 & $d_L^{best}=56$ &nonlinear, $d_H^{best}=56$  \\
  \hline
  4 & 3 & 3 & 2 & 88 & $4^32^1$ & 64 & $d_L^{best}=88$ &nonlinear, $d_H^{best}=88$  \\
  \hline
  4 & 3 & 4 & 3 & 120 & $4^3$ & 92 & $d_L^{best}=120$ &nonlinear, $d_H^{best}=120$  \\
  \hline

  \end{tabu}
  \label{tabcodesthm1}  
\end{table}

  


\end{document}